\documentclass[journal]{IEEEtran}
\usepackage[numbers,sort&compress]{natbib}
\usepackage{amsmath,amssymb,amsfonts,amsthm}
\usepackage{algpseudocode}
\usepackage{algorithmicx,algorithm}
\usepackage{graphicx}
\usepackage[caption=false,font=normalsize,labelfont=sf,textfont=sf]{subfig}
\usepackage{xcolor}
\usepackage{bm}
\usepackage{stfloats}
\usepackage{cleveref}
\usepackage{empheq}
\usepackage[theorems,skins]{tcolorbox}
\usepackage{ragged2e}
\hyphenation{op-tical net-works semi-
conduc-tor IEEE-Xplore}
\newcounter{TempEqCnt}
\begin{document}
\newtheorem{proposition}{Proposition}
\newtheorem{lemma}{Lemma}

\title{Beam Pattern Modulation Embedded Hybrid Transceiver Optimization for Integrated Sensing and Communication}

\author{Boxun Liu,~\IEEEmembership{Graduate Student Member,~IEEE,} Shijian Gao,~\IEEEmembership{Member, IEEE,} Zonghui Yang,~\IEEEmembership{Graduate Student Member,~IEEE,} Xiang Cheng,~\IEEEmembership{Fellow,~IEEE,} Liuqing Yang,~\IEEEmembership{Fellow,~IEEE}

\thanks{{Part of this paper has been presented at the 2024-Spring IEEE Vehicular Technology Conference (VTC2024-spring, Singapore) \cite{liu2024beam}}.

Boxun Liu, Zonghui Yang, and Xiang Cheng are with the State Key Laboratory of Advanced Optical Communication Systems and Networks, School of Electronics, Peking University, Beijing 100871, China (e-mail:boxunliu@stu.pku.edu.cn; yzh22@stu.pku.edu.cn;xiangcheng@pku.edu.cn).

Shijian Gao is with the Internet of Things Thrust, The Hong Kong University of Science and Technology (Guangzhou), Guangzhou 511400, China (e-mail: shijiangao@hkust-gz.edu.cn).

Liuqing Yang is with the Internet of Things Thrust and Intelligent Transportation Thrust, The Hong Kong University of Science and Technology (Guangzhou), Guangzhou 511400, China, and also with the Department
of Electronic and Computer Engineering and the Department of Civil and Environmental Engineering, The Hong Kong University of Science and Technology, Hong Kong, SAR, China (e-mail: lqyang@ust.hk).
}}

\markboth{}%
{Shell \MakeLowercase{\textit{et al.}}: A Sample Article Using IEEEtran.cls for IEEE Journals}
\maketitle
\begin{abstract}
Integrated sensing and communication (ISAC) emerges as a promising technology for 6G, particularly in the millimeter-wave (mmWave) band. 
However, the widely utilized hybrid architecture in mmWave systems compromises multiplexing gain due to the constraints of limited radio-frequency (RF) chains.
Moreover, additional sensing functionalities exacerbate the impairment of spectrum efficiency (SE). 
In this paper, we present an optimized beam pattern modulation-embedded ISAC (BPM-ISAC) transceiver design, which spares one RF chain for sensing and {uses the remaining ones for communication}.
To compensate for the reduced SE, index modulation across communication beams is applied.
We formulate an optimization problem aimed at minimizing the mean squared error (MSE) of the sensing beampattern, subject to a symbol MSE constraint. 
This problem is then solved by sequentially optimizing the analog and digital parts.
Both the multi-aperture structure (MAS) and the multi-beam structure (MBS) are considered in the analog part.
We conduct theoretical analysis on the asymptotic pairwise error probability (APEP) and the Cramér-Rao bound (CRB) of direction of arrival (DoA) estimation.
Numerical simulations validate the overall enhanced ISAC performance over existing alternatives.
\end{abstract}

\begin{IEEEkeywords}
Integrated sensing and communications (ISAC), mmWave, hybrid transceivers, beam pattern modulation
\end{IEEEkeywords}

\section{Introduction}
\IEEEPARstart{I}{ntegrated} sensing and communications (ISAC) \cite{liu2023seventy,cheng2022integrated,cheng2023intelligent,fan2022radar} is a pivotal technology for B5G/6G, striving for symbiosis and mutual enhancement of communication and sensing with sharing resources such as spectrum, hardware, and energy. 
Recently, millimeter-wave (mmWave) ISAC has gained substantial attention due to its broader bandwidth, facilitating higher data rates and improved detection accuracy for both communication and sensing.
Moreover, sensing and communication share similar channel characteristics and signal processing techniques in the mmWave frequency band \cite{liu2022integrated}, further enabling their seamless integration.

Transceiver design is vital for mmWave ISAC system, aiming to realize better performance trade-offs between communication and sensing.
{A large proportion of ISAC transceivers \cite{huang2020majorcom,ma2021spatial,xu2022hybrid} rely on fully digital (FD) architectures, making them impractical to deploy in mmWave ISAC massive MIMO systems due to high hardware costs and power consumption. 
To address this issue, some studies have explored low-cost hybrid architectures \cite{gao2018low} for mmWave ISAC transceiver design \cite{liu2019hybrid,wang2022partially,zhang2018multibeam,zhuo2022multibeam,gao2022integrated}, where the number of RF chains is fewer than the number of antennas.
In \cite{liu2019hybrid}, a fully-connected hybrid transceiver design was proposed for the single-user mmWave ISAC scenario by approximating the optimal communication and radar precoder. 
To further lower the hardware cost, the partially-connected hybrid transceiver architecture \cite{wang2022partially} has been adopted for enabling multi-user ISAC, which minimizes the Cramér-Rao bound (CRB) for direction of arrival (DoA) estimation under communication constraints. 
However, the spectral efficiency (SE) of hybrid ISAC systems is impaired due to two factors. 
On the one hand, the restricted number of RF chains damages the potential multiplexing gain (MG).
On the other hand, additional sensing functions will consume system resources, inevitably causing a further decrease in SE.}

{To achieve higher SE, index modulation (IM) has emerged as a promising technology for delivering additional information by selectively activating the state of certain resource domains \cite{younis2010generalised,ding2018beam,gao2019spatial}, such as antennas and subcarriers \cite{yang2024superposed}.
Recently, some sensing-centric ISAC transceiver designs have been proposed in conjunction with IM to improve SE \cite{ma2023index,huang2020majorcom,ma2021spatial,xu2022hybrid}. 
In \cite{huang2020majorcom}, a multi-carrier agile joint radar communication (MAJoRCom) system was proposed based on carrier agile phased array radar (CAESAR), where communication bits are transmitted through selective activation of radar waveforms on subcarriers and antennas. 
Furthermore, a hybrid index modulation (HIM) scheme was proposed \cite{xu2022hybrid} for frequency hopping MIMO radar communications system, where communication bits are transmitted through index modulation on entwined frequency, phase, and antenna tuples.
While radar functionality remains unaffected in \cite{huang2020majorcom,xu2022hybrid}, it results in a significantly low communication rate.
In \cite{ma2021spatial}, a spatial modulation-based communication-radar (SpaCoR) system was proposed, where individual sensing and communication waveforms are transmitted on different antennas, and generalized spatial modulation (GSM) is adopted to embed additional data bits through antenna selection.
Nevertheless, the data rates are limited by the radar pulse period.
In addition, these designs rely on antenna activation-based index modulation and are exclusively designed for FD architecture, limiting direct application to hybrid systems.}

To better cope with the hybrid structures, generalized beamspace modulation (GBM) was proposed in \cite{gao2019spatial}, which utilizes the unique sparsity of mmWave beamspace channel to elevate SE by implementing IM over beamspace.
However, it is designed for mmWave communication-only systems without sensing capabilities.
Built upon GBM, recent works \cite{guo2022non,elbir2023millimeter,elbir2023spatial} introduce similar IM schemes into mmWave ISAC systems to attain higher SE. 
In \cite{guo2022non}, the dual-functional beam pattern is selectively activated in a non-uniform manner for a higher SE. 
Nonetheless, the additional sensing capability of beam patterns inevitably compromises the communication performance.
In \cite{elbir2023millimeter}, the ISAC transmitter selectively activates partial spatial paths for communication and employs a single fixed beam for sensing, termed SPIM-ISAC.
The design of separate communication and sensing beams enhances communication SE while ensuring sensing performance.
The subsequent work \cite{elbir2023spatial} delves further into the consideration of the beam squint effect in the terahertz frequency band.
{In fact, SPIM is a special case of GBM without beam optimization, which constructs beamspace through fixed strongest spatial paths.
However, as it extends to non-line-of-sight (NLoS) scenarios, multi-angle sensing beams introduce potential disturbance to communication receivers due to the randomness of targets' angles, thereby deteriorating the error performance.
Additionally, SPIM-ISAC achieves a performance trade-off solely through power allocation between optimal communications-only and sensing-only beamformers, lacking a comprehensive consideration of overall performance.}

Considering the limitations highlighted in the previously mentioned works, we have developed a communication-centric mmWave ISAC transceiver design, where one dedicated RF chain is reserved for sensing. 
To address the decrease in SE resulting from the reduced number of RF chains allocated for communication, we have extended GBM to beam pattern modulation (BPM) for communication beams.
Compared to GBM based on the ideal beamspace domain and SPIM based on the channel path domain, BPM considers a more generalized beam pattern concept, where each beam pattern is formed by the corresponding column of analog precoders.
{Nevertheless, more flexible beam pattern and additional sensing requirements increase the complexity of the transceiver optimization.}
In light of the sensing interference on the communication receiver, we formulate a joint optimization problem to minimize the sensing beampattern mean squared error (MSE) under the symbol MSE constraint.
We solve it by optimizing analog and digital parts sequentially, where both the multi-aperture structure (MAS) and the multi-beam structure (MBS) are considered for analog part optimization.
For MBS, a low-complexity 2-step beam selection algorithm based on the min-MSE criterion is proposed.
For MAS, we adopt the branch and bound algorithm for analog sensing precoder design and the entry-wise iteration algorithm for analog communication parts design.
With the fixed analog part, the digital part is optimized using the proposed alternating optimization algorithm for improved power allocation.
Moreover, the communication and sensing performance are theoretically analyzed. \footnote{Simulation codes are provided to reproduce the results presented in this paper: https://github.com/liuboxun/BPM-ISAC}
The contributions of our work can be summarized as follows.

\begin{itemize}
\item We propose a beam pattern modulation-embedded hybrid transceiver design for mmWave ISAC systems (BPM-ISAC), where the ISAC transmitter provides multi beams for single-user communication and scanning beams for sensing, respectively.
The communication beams are selectively activated to enhance SE.
\item {We formulate an optimization problem to minimize sensing beampattern MSE with symbol MSE constraint and solve it via optimizing analog and digital parts sequentially.}
Two typical hybrid structures, namely MBS and MAS, are considered for the analog part design.
\item We conduct a theoretical analysis of the complexity and convergence of the proposed algorithm. 
The asymptotic pairwise error probability (APEP) and the CRB are derived to illustrate the bit error and DoA estimation performance. 
Additionally, simulation results validate the proposed method's advantages in ISAC.
\end{itemize}
\begin{figure*}[ht]
\center{\includegraphics[width=16cm]  {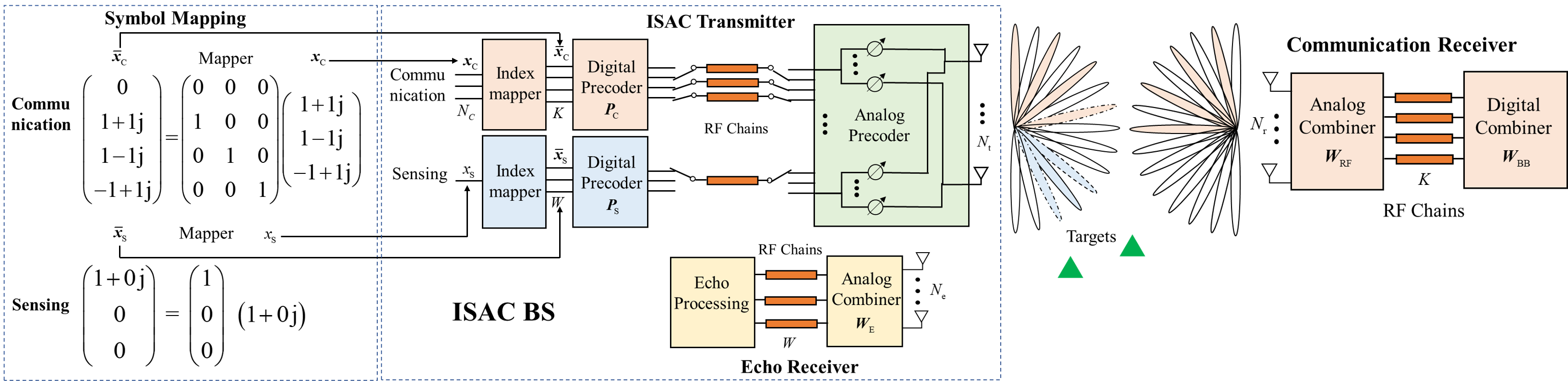}} 
\vspace{-2mm}
\caption{{An illustration of the symbol mapping scheme and the transceiver architecture for the proposed BPM-ISAC mmWave system. ($N_{\rm C}=3$,$K=4$,$W=3$)}}
\vspace{-4mm}
 \label{System}
\end{figure*}

The rest of this paper is organized as follows: Section \uppercase\expandafter{\romannumeral2}
 introduces the system and signal model of the proposed BPM-ISAC system. 
 Section \uppercase\expandafter{\romannumeral3} formulates an optimization problem and Section \uppercase\expandafter{\romannumeral4} proposes a joint hybrid transceiver design to solve it.
 Then Section  \uppercase\expandafter{\romannumeral5} gives the performance analysis and Section  \uppercase\expandafter{\romannumeral6} provides numerical simulations.
 Finally, Section  \uppercase\expandafter{\romannumeral7} concludes this paper.

\textit{Notation}:  $(\cdot)^{\rm T}$, $(\cdot)^{\rm H}$, $(\cdot)^\dagger$,  $\Vert\cdot\Vert_2$, and $\Vert\cdot\Vert_F$ denote the transpose, the conjugate transpose, pseudo-inverse, 2 norm and  Frobenius norm, respectively. 
{$\bm{a}[i]$ is the $i\mbox{-}$th element of a vector $\bm{a}$ and $\bm{A}[i,j]$ denotes the element of matrix $\bm{A}$ at the $i\mbox{-}$th row and the $j\mbox{-}$th column.} 
$\dot{\bm{a}}(\psi)= \frac{ \partial \bm{a}(\psi) }{\partial \psi}$ means the derivative of vector $\bm{a}$ over $\psi$.
$\mathcal{CN}(m,\sigma^2)$ represents the complex Gaussian distribution whose mean is $m$ and covariance is $\sigma^2$. 
{$\bm{F}_N$ denotes the $N\mbox{-}$dimensional discrete Fourier transform (DFT) matrix.}
$\bm{I}_K$ is the $K\times K$ identity matrix and $\bm{1}_K$ denotes the $K\times 1$ all-one column vector. ${\rm{diag}}(\bm{a})$ means diagonal matrix formed from vector $\bm{a}$ and $\mathbb{E}(\cdot)$ means expectation operation. $\mathbb{R}$ and $\mathbb{C}$  denote the set of real numbers and complex numbers, respectively.

\vspace{-2mm}
\section{System model}

As shown in Fig. \ref{System}, we consider a mmWave ISAC system for point-to-point communication and multi-target detection, which consists of an ISAC base station (BS) and a communication receiver. 
The ISAC BS comprises an ISAC transmitter and an echo receiver for simultaneous communication and monostatic sensing. 
In our modeling, the communication receiver and sensing targets are assumed to be spatially distinct.
Both the ISAC transmitter, the echo receiver, and the communication receiver adopt fully-connected hybrid architecture, equipped with $N_{\rm t}$, $N_{\rm e}$, and $N_{\rm r}$ half-wavelength spaced uniform linear antenna array, respectively.  
\vspace{-2mm}
\subsection{{Beam Pattern Modulation for ISAC}}
At the ISAC transmitter, $K$ communication beams and $W$ sensing scanning beams are generated through corresponding digital and analog precoders, i.e.,
\begin{align}
    \bm{s}=\bm{F}_{\rm C}\bm{P}_{\rm C}\bar{\bm{x}}_{\rm C}+\bm{F}_{\rm S}\bm{P}_{\rm S}\bar{\bm{x}}_{\rm S},
\end{align}
where $\bm{\bar{x}}_{\rm C} \in \mathbb{C}^{K \times 1}$ and $\bm{\bar{x}}_{\rm S}\in\mathbb{C}^{W \times 1}$ are the {mapped} communication and sensing symbols, respectively. 
$\bm{F}_{\rm C}\in\mathbb{C}^{N_{\rm t}\times K}$  and $\bm{F}_{\rm S}\in\mathbb{C}^{N_{\rm t}\times W}$ are analog precoders for communication and sensing, respectively.
{
$\bm{P}_{\rm C}={\rm{diag}}\left(\bm{p}\right)\in\mathbb{R}^{K \times K}$ and $\bm{P}_{\rm S}={\rm{diag}}\left(\bm{b}\right)\in\mathbb{R}^{W \times W}$ are the corresponding digital precoders, where each element of $\bm{p}$ and $\bm{b}$ represents the power allocation on the associated beam.} 
\subsubsection{{Communication}}
For communication, IM is implemented on the beam pattern domain. 
Specifically, in each symbol period, only $N_{\rm C}$ out of $K$ communication beams are activated.
To realize the selective activation, $N_{\rm C}$-dimensional non-zero data stream $\bm{x}_{\rm C} \in \mathbb{C}^{N_{\rm C}\times 1}$ is mapped to $K$-dimensional zero-containing $\bm{\bar{x}}_{\rm C}$ with totally ${\rm C}_K^{N_{\rm C}}$ possible index patterns.
Similar to \cite{gao2019spatial}, $2^{\lfloor {\rm log_{2}}{{\rm C}_K^{N_{\rm C}}} \rfloor}$ of these patterns are utilized to transmit additional  $\lfloor {\rm log_{2}}{{\rm C}_K^{N_{\rm C}}} \rfloor$ index bits.
Suppose M-ary phase shift keying/quadrature amplitude modulation (PSK/QAM) is adopted for communication, and the achievable SE is given by
\begin{align}\label{SE}
    \eta=N_{\rm C}\log_2{M}+\lfloor{\rm log_{2}}{{\rm C}_K^{N_{\rm C}}} \rfloor {\rm bps/Hz}.
\end{align}
\subsubsection{{Sensing}}
For sensing, {to save RF chains}, each of the $W$ beams is sequentially activated to scan $W$ directions of interest. 
Therefore, sensing signal $x_{\rm S}$ is mapped to a $W$-dimensional one-hot vector $\bm{\bar{x}}_{\rm S}\in\mathbb{C}^{W \times 1}$ before transmission.
For more flexible sensing, the case of non-equal probability scanning is considered.
Denote the activation probability matrix as $\bm{D}={\rm{diag}}([d_1,...,d_W])$, where $d_i$ represents the predefined activation probability of the $i\mbox{-}$th sensing beam and satisfies $\sum_{i=1}^Wd_i=1$.

For the hardware implementation, as shown in Fig. \ref{System}, only $N_{\rm C}$ and one RF chains are deployed at the ISAC transmitter for communication and sensing, respectively. 
{The switching network is controlled to adjust the RF chain connection for non-zeros symbols transmission.} 
In addition, we assume $K$ and $W$ RF chains are employed for the communication receiver and echo receiver, respectively.

\subsection{{I-O Relationship of Communication}}
The classical Saleh-Valenzuela mmWave channel \cite{saleh1987statistical} with $P$ dominant paths is adopted as
\begin{align}\label{channel} 
	\bm{H}=\sqrt{\frac{N_{\rm t}N_{\rm r}}{P}}\sum_{i = 1}^{P}\alpha_i\bm{a}_{N_{\rm r}}\left(\theta_i\right)\bm{a}_{N_{\rm t}}^{\rm H}\left(\phi_i\right), 
\end{align}
where $\alpha_i$ is the complex gain of the $i\mbox{-}$th path.  $\bm{a}_{N_{\rm r}}\left(\theta_i\right)$ and $\bm{a}_{N_{\rm t}}\left(\phi_i\right)$ are the  channel steering vectors of the $i\mbox{-}$th path, where {$\bm{a}_N(\theta)[i]=\frac{1}{\sqrt{N_{\rm t}}}e^{-j\pi(i-1) \sin(\theta)}$.} 
We assume that $\bm{H}$ is available at the transmitter.  
{In the time division duplex (TDD) system, this can be achieved via advanced channel estimation schemes \cite{gao2020estimating} in the uplink, while in the frequency division duplex (FDD), this can resort to downlink estimation accompanied by dedicated feedback strategies \cite{guo2022overview}.}

The received signal processed by analog combiner $\bm{W}_{\rm RF}\in\mathbb{C}^{N_{\rm r}\times K}$ becomes
\begin{align}
    \bm{y}_C =&\bm{W}^{\rm H}_{\rm RF}\bm{H}\bm{F}_{\rm C}\bm{P}_{\rm C}\bm{\bar{x}}_{\rm C}+\bm{W}^{\rm H}_{\rm RF}\bm{H}\bm{F}_{\rm S}\bm{P}_{\rm S}\bm{\bar{x}}_{\rm S}+\bm{\xi}_{\rm C}\nonumber\\
=&\bm{H}_{\rm C}\bm{P}_{\rm C}\bm{\bar{x}}_{\rm C}+\bm{H}_{\rm S}\bm{P}_{\rm S}\bm{\bar{x}}_{\rm S}+\bm{\xi}_{\rm C}, 
\end{align}
where $\bm{H}_{\rm C}$ and $\bm{H}_{\rm S}$ is the equivalent digital channel (EDC) for communication and sensing, and $\bm{\xi}_{\rm C}\sim\mathcal{CN}(0,\sigma^2 \bm{I}_K)$ is additive white Gaussian noise (AWGN). 
{It is noteworthy that the communication received signal is disturbed by sensing signal and noise.
There exists a trade-off between sensing and communication, wherein higher sensing power will increase the symbol error rate.
}

To eliminate the sensing interference, one possible method is to estimate the sensing signal and then subtract it at the communication receiver \cite{zhuo2022multibeam}. 
However, this scheme assumes that
{the instantaneous sensing signal} is known to the user, limiting the freedom degree of sensing waveform and increasing the operational complexity. 
In this paper, {we suppose that only second-order statistics of the sensing signal are known at the communication receiver.}
Therefore, the well-known LMMSE equalizer is adopted as
\begin{align} \label{m}
\bm{W}_{\rm BB}=&\bm{R}_{\bm{\bar{x}}_{\rm C}}\bm{P}_{\rm C}\bm{H}_{\rm C}^{\rm H}\left(\bm{H}_{\rm C}\bm{P}_{\rm C}\bm{R}_{\bm{\bar{x}}_{\rm C}}\bm{P}_{\rm C}\bm{H}_{\rm C}^{\rm H}+\right.\nonumber\\
&\left.\bm{H}_{\rm S}\bm{P}_{\rm S}\bm{R}_{\bm{\bar{x}}_{\rm S}}\bm{P}_{\rm S}\bm{H}_{\rm S}^{\rm H}+\sigma^2\bm{I}_K\right)^{-1},
\end{align}
where 
$\bm{R}_{\bm{\bar{x}}_{\rm C}}=\mathbb{E}\left[\bm{\bar{x}}_{\rm C}\bm{\bar{x}}_{\rm C}^{\rm H}\right]=\frac{N_{\rm C}}{K}\bm{I}_K$ and $R_{\bm{\bar{x}}_{\rm S}}=\mathbb{E}\left[\bm{\bar{x}}_{\rm S}\bm{\bar{x}}_{\rm S}^{\rm H}\right]=\bm{D}$.
Then the symbol $\bm{\bar{x}}_{\rm C}$ is estimated as   
 \begin{align}\label{esti sym} \tilde{\bm{x}}_C=\bm{W}_{\rm BB}\left(\bm{H}_{\rm C}\bm{P}_{\rm C}\bm{\bar{x}}_{\rm C}+\bm{H}_{\rm S}\bm{P}_{\rm S}\bm{\bar{x}}_{\rm S}+\bm{\xi}_{\rm C}\right).
 \end{align}
The information bits contained in $\bm{x}_{\rm C}$ and index bits can be estimated by the maximum likelihood (ML) detector or 2-step quantization detector \cite{gao2019spatial}, so the details are omitted here.

\subsection{{I-O Relationship of Sensing}}
Suppose there are $N$ pointed targets with $i\mbox{-}$th one locating at angles $\psi_i$. 
The ISAC transmitter sequentially transmits $W$ sensing beams to cover the range of interest.
Then the echoes are received to estimate the target parameters. 
Considering quasi-static sensing processes, it is equivalent to simultaneously scanning and receiving echoes from various directions. 
We assume that the self-interference on the echo receiver from the transmitter can be effectively mitigated \cite{liu2023joint}.
{Besides, the communication reflected signals at the target are ignored because they are relatively weak compared to the sensing signal \cite{zhuo2022multibeam}.}
{During per scanning}, the received echo signal is approximated as
\noindent
\begin{align}
\bm{y}_{\rm R}&=\sum_{i=1}^{N}\beta_i \bm{a}_{N_{\rm e}}\left(\psi_i\right)\bm{a}_{N_{\rm t}}^{\rm H}\left(\psi_i\right)\left(\bm{F}_{\rm S}\bm{P}_{\rm S}\bar{\bm{x}}_{\rm S}+\bm{F}_{\rm C}\bm{P}_{\rm C}\bar{\bm{x}}_{\rm C}\right)+\bm{\xi}_{\rm R}\nonumber\\
&\overset{(a)}{\simeq}\sum_{i=1}^{N}\beta_i\bm{a}_{N_{\rm e}}\left(\psi_i\right)\bm{a}_{N_{\rm t}}^{\rm H}(\psi_i)\bm{F}_{\rm S}\bm{P}_{\rm S}\bar{\bm{x}}_{\rm S}+\bm{\xi}_{\rm R},
\end{align}
where $\beta_i$ denotes the $i\mbox{-}$th reflection coefficient of the target and $\bm{\xi}_{\rm R}\in \mathbb{C}^{N_{\rm e}\times 1}$ is the additive white Gaussian noise.
(a) is for that the communication beam will be selected to stay away from the target direction to avoid sensing interference, and $\bm{a}_{N_{\rm t}}^{\rm H}(\psi_i)\bm{F}_{\rm C}\bm{P}_{\rm C}\bar{\bm{x}}_{\rm C}\simeq 0$ is satisfied. 
{For cases where the sensing targets are close to the communication receiver, sensing can be achieved simply using the communication beam without additional optimization, which is not within the scope of our study.}
Denote $\bm{\Xi}={\rm{diag}}\left(\beta_1, ..., \beta_N\right)$ and $\bm{A}_M=[\bm{a}_M\left(\psi_1\right),...,\bm{a}_M\left(\psi_N\right)]$, and {we can obtain the compact form as}
\begin{align}
\bm{y}_{\rm R}\simeq\bm{A}_{N_{\rm e}}\Xi\bm{A}_{N_{\rm t}}^{\rm H}\bm{F}_{\rm S}\bm{P}_{\rm S}\bar{\bm{x}}_{\rm S}+\bm{\xi}_{\rm R}.
\end{align}
For simplicity, we assume that $N_{\rm t}=N_{\rm r}$ and denote $\bm{A}=\bm{A}_{N_{\rm t}}=\bm{A}_{N_{\rm e}}$.
Since the direction of departure (DoD) and DoA of the target are the same, the sensing analog combiner $\bm{W}_{\rm E}$ can be implemented as $\bm{F}_{\rm S}$. Then the sensing baseband received signal is derived as 
 \begin{align}\label{sensing baseband}
     \bm{y}_{\rm B}&=\bm{F}_{\rm S}^{\rm H}\bm{A}\Xi\bm{A}^{\rm H}\bm{F}_{\rm S}\bm{P}_{\rm S}\bar{\bm{x}}_{\rm S}+\bm{F}_{\rm S}^{\rm H} \bm{\xi}_{\rm R}\nonumber\\
     &=\bm{T}_{\rm B}^{\rm H}\Xi\bm{T}_{\rm B}\bm{P}_{\rm S}\bar{\bm{x}}_{\rm S}+\bm{\xi}_{\rm B},
 \end{align}
where $\bm{T}_{\rm B}=\bm{A}^{\rm H}\bm{F}_{\rm S}$ and $\bm{\xi}_{\rm B}\sim\mathcal{CN}(0,\bm{R}_{\rm B})$. The target parameters can be estimated with existing algorithms. 
{For example, DoA estimation can be performed using the beamspace MUSIC algorithm \cite{lee1990resolution}.}
\vspace{-2mm}
\section{{Problem Formulation for BPM-ISAC}}
In this section, we establish the performance criterion of sensing and communication, and formulate a joint optimization problem to achieve the desired sensing beampattern with reliable communication via optimizing the hybrid transceivers.
\subsection{{Sensing Performance Criterion}}
For accurate parameter estimation, radiating sufficient energy in the directions of interest is crucial. 
Hence, the beampattern is adopted as the sensing performance metric, measuring the sensing power distribution in different directions.
Denote $\theta_t$ as the $t\mbox{-}$th direction of interest, then the actual sensing beampattern is defined as 
\begin{align}
	\bm{v}=[\lvert b_1 \bm{a}_{N_{\rm t}}^{\rm H}(\theta_1)\bm{F}_{\rm S}[:,1]\rvert,..,\lvert b_W \bm{a}_{N_{\rm t}}^{\rm H}(\theta_W)\bm{F}_{\rm S}[:,W]\rvert]^T,
\end{align}
where $b_i$ is the $i\mbox{-}$th element of $\bm{b}$ and represents the allocated power on $i\mbox{-}$th sensing beam.

To augment sensing performance, the actual sensing beampattern needs to maximally match the ideal beampattern. 
We predefine the ideal beampattern as $\bm{t}\in \mathbb{R}^{W\times1}$, which satisfies the power constraint $\Vert \bm{D}^\frac{1}{2}\bm{t}\Vert_2^2=\sum_{i=1}^{W}d_it_i^2=T_{\rm R}$, where $t_i$ is the $i\mbox{-}$th element of $\bm{t}$ and $T_{\rm R}$ is the average sensing power. 
The sensing beampattern MSE \cite{cheng2022qos} is adopted as the criterion of sensing performance, which measures the discrepancy between sensing beampattern $\bm{v}$ and ideal beampattern $\bm{t}$.
Considering the activation probability of each beam, it is derived as
\begin{align}
	{\rm MSE_S}(\bm{P}_{\rm S},\bm{F}_{\rm S})=\Vert \bm{D}^\frac{1}{2}(\bm{v}-\bm{t})\Vert_2^2=\sum_{i=1}^{W}d_i(v_i-t_i)^2.
\end{align}
\subsection{{Communication Performance Criterion}}
{The communication performance includes effectiveness and reliability, which are characterized by SE and symbol error, respectively.}
Since the SE of the proposed method is determined as Eq. (\ref{SE}), we resort to the symbol MSE under fixed SE as the performance metric to characterize the transmission reliability.
According to Eq. (\ref{esti sym}), the symbol MSE of $\bm{\bar{x}}_{\rm C}$, $\mathbb{E}\left(\Vert\tilde{\bm{x}}_C-\bm{\bar{x}}_{\rm C}\Vert_2^2\right)$, is derived as
    \begin{align}\label{MSE}
    &{\rm MSE_C}(\bm{H}_{\rm C},\bm{H}_{\rm S},\bm{P}_{\rm C},\bm{P}_{\rm S})\nonumber\\
    =&\frac{N_{\rm C}}{K}\Vert\bm{W}_{\rm BB}\bm{H}_{\rm C}\bm{P}_{\rm C}-\bm{I}_K\Vert_F^2\nonumber\\
    &+\Vert\bm{D}^\frac{1}{2}\bm{W}_{\rm BB}\bm{H}_{\rm S}\bm{P}_{\rm S}\Vert_F^2 +\sigma^2\Vert\bm{W}_{\rm BB}\Vert_F^2.
    \end{align}
As can be seen, the MSE stems from three factors: symbol estimation residual, sensing interference, and noise.
Since the sensing interference intensity may vary significantly with different channel realizations, it is difficult to determine a fixed MSE threshold remaining proper under all channel conditions. 
Therefore, we introduce a relative MSE threshold related to $\bm{H}_{\rm C}$ and $\bm{H}_{\rm S}$ as follows:

Assuming no processing is applied on the digital part, i.e., 
 $\bm{P}_{\rm S}={\rm{diag}}\left(\bm{t}\right)$ and $\bm{P}_{\rm C}=\bm{I}_K$, the corresponding digital combiner becomes
\begin{align}
\bm{W}_{\rm BB,0}=\frac{N_{\rm C}}{K}\bm{H}_{\rm C}^{\rm H}&\left(\frac{N_{\rm C}}{K}\bm{H}_{\rm C}\bm{H}_{\rm C}^{\rm H}\right.+\nonumber\\
&\bm{H}_{\rm S}\left({\rm{diag}}\left(\bm{t}\right)\right)^2\bm{D}\bm{H}_{\rm S}^{\rm H}+\sigma^2\bm{I}_K\left.\right)^{-1}.
\end{align}
The relative symbol MSE threshold is defined as
{\begin{align}\label{threshold}
\Gamma(\bm{H}_{\rm C},\bm{H}_{\rm S},\mu)=&\frac{N_{\rm C}}{K}\Vert\bm{W}_{\rm BB,0}\bm{H}_{\rm C}-\bm{I}_K\Vert_F^2+\nonumber\\
\mu\Vert\bm{D}^\frac{1}{2}\bm{W}_{\rm BB,0}&\bm{H}_{\rm S}{\rm{diag}}(\bm{t})\Vert_F^2+\sigma^2\Vert\bm{W}_{\rm BB,0}\Vert_F^2,
\end{align}}

{\noindent where $0 \le\mu\le 1$ is the weighting coefficient, signifying the relative tolerance for sensing interference cancellation errors. As $\mu$ increases, a lighter emphasis is placed on the communication side, resulting in improved sensing performance.}
\subsection{{Problem Formulation}}\label{original opt.}
\begin{table}[t]
\caption{{Description of each component in ISAC hybrid transceivers}}
\label{notation}
\centering
\begin{tabular}{c|c}
\hline
 \textbf{Symbol} & \textbf{Parameter} \\ \hline
$\bm{P}_{\rm C}$  & Digital communication precoder \\ \hline
$\bm{P}_{\rm S}$ & Digital sensing precoder \\ \hline
$\bm{F}_{\rm C}$ & Analog communication precoder \\ \hline
$\bm{F}_{\rm S}$ & Analog sensing precoder \\ \hline
$\bm{W}_{\rm RF}$ & Analog communication combiner \\ \hline
\end{tabular}
\end{table}
To minimize sensing beampattern MSE while adhering to constraints on symbol MSE, transmit power, as well as the analog precoder, a joint optimization problem of hybrid transceivers is formulated as
\begin{subequations}
\begin{align}
\min_{\substack{\bm{P}_{\rm C},\bm{P}_{\rm S}\\ \bm{F}_{\rm C},\bm{F}_{\rm S},\bm{W}_{\rm RF}}} &{\rm MSE_S}(\bm{P}_{\rm S},\bm{F}_{\rm S})\\
 s.t. \quad\enspace\, &{\rm MSE_C}(\bm{H}_{\rm C},\bm{H}_{\rm S},\bm{P}_{\rm C},\bm{P}_{\rm S}) \hspace{-0.04in}\le\hspace{-0.04in}\Gamma(\bm{H}_{\rm C},\bm{H}_{\rm S},\mu), \label{MSE Cons}\\
 &\Vert\bm{P}_{\rm C}\Vert_F^2\le K,\label{comm power}\\
 &\Vert\bm{D}^{\frac{1}{2}}\bm{b}\Vert_F^2\le T_{\rm R}, \label{radar power0}\\ &\bm{F}_{\rm C}\in\mathcal{F}, \label{ct analog set}
 \\ &\bm{F}_{\rm S}\in\mathcal{F}, \label{r analog set}\\
 &\bm{W}_{\rm RF}\in\mathcal{W}.\label{cr analog set}
\end{align}
\end{subequations}
The expressions of $\chi$ and $\Gamma(\mu)$ are provided in Eqs. (\ref{MSE}) and (\ref{threshold}), respectively. 
(\ref{MSE Cons}) represents MSE constraint on communication symbol estimation, (\ref{comm power}) corresponds to the communication power constraint, and (\ref{radar power0}) denotes the average sensing power constraint. 
{In Eqs. (\ref{ct analog set})-(\ref{cr analog set}), $\mathcal{F}$ and $\mathcal{W}$ represent the feasible analog precoder sets of the ISAC transmitter and communication receiver, respectively. We consider two representative analog configurations, i.e., MBS and MAS.} 
For MBS, {low-cost lens array antennas \cite{gao2018low} are employed} and each column of analog precoders is selected from $N$-dimensional DFT codebook $\mathcal{F}_{N}=\left\{\bm{F}_N[:,1],...,\bm{F}_N[:,N]\right\}$, where $N$ takes on $N_{\rm t}$ or $N_{\rm r}$.
For MAS, the fully connected $B$-bit phase shifter network is adopted, and the adjustable angles of each element of analog precoders are selected from $\mathcal{B}=\left\{0,\frac{2\pi}{2^B},...,\frac{2\pi(2^B-1)}{2^B}\right\}$.
{For convenience, the notations of all optimized variables are summarized in Table \ref{notation}.}
\section{{Hybrid Transceiver design for BPM-ISAC}}
In this section, we propose an efficient transceiver design to address the optimization problem formulated above. 
Due to the coupling of all five variables in the non-convex constraint (\ref{MSE Cons}) and the discrete nature of the feasible analog precoder set, the original problem is a complex mixed-integer non-convex large-scale combinatorial optimization problem that is difficult to solve. 
As a result, we address it by optimizing analog and digital parts sequentially.
\vspace{-3mm}
\subsection{{{Analog-part Optimization for BPM-ISAC}}}\label{section analog}
Considering that the analog part forms the EDC and plays a fundamental role in digital part design \cite{gao2021mutual}, we first optimize the analog part with the unoptimized digital part. 
Then the symbol MSE is converted to the function of 
$\bm{H}_{\rm S}$ and $\bm{H}_{\rm C}$, i.e.,
\begin{align}
    \overline{\rm MSE}_{\rm C}(\bm{H}_{\rm C},\bm{H}_{\rm S})=&\frac{N_{\rm C}}{K}\Vert\bm{W}_{\rm BB,0}\bm{H}_{\rm C}-\bm{I}_K\Vert_F^2+\nonumber\\
    \Vert\bm{D}^\frac{1}{2}\bm{W}_{\rm BB,0}&\bm{H}_{\rm S}{\rm{diag}}(\bm{t})\Vert_F^2+\sigma^2\Vert\bm{W}_{\rm BB,0}\Vert_F^2.
\end{align}

It is worth noting that $\overline{\rm MSE}_{\rm C}(\bm{H}_{\rm C},\bm{H}_{\rm S})$ is the upper bound of the relative symbol MSE threshold, which measures the communication performance of EDC.

Firstly, the analog sensing precoder $\bm{F}_{\rm S}$ is optimized to point in the direction of interest by solving the following problem.
\begin{subequations}
\begin{empheq}[box=\fbox]{align}
	\mathcal{P}.1\mbox{-}1:~&\textbf{{Optimization for analog sensing precoder}}\nonumber\\
&\mathop{\min}_{\bm{F}_{\rm S}} \enspace{\rm MSE_S}(\bm{F}_{\rm S})\nonumber\\
& \ s.t.  \enspace \ \bm{F}_{\rm S}\in\mathcal{F}.\nonumber
\end{empheq}
\end{subequations}
Then $\bm{F}_{\rm C}$ and $\bm{W}_{\rm RF}$ are jointly optimized with fixed $\bm{F}_{\rm S}$ to minimize symbol MSE as follows:
\noindent
\begin{subequations}
\begin{empheq}[box=\fbox]{align}
\mathcal{P}.1\mbox{-}2:~&\textbf{\hspace{-1mm}Optimization for communication's analog part}\nonumber\\
&\mathop{\min}_{\bm{F}_{\rm C},\bm{W}_{\rm RF}}\overline{\rm MSE}_{\rm C}(\bm{H}_{\rm C},\bm{H}_{\rm S})\nonumber\\
& \quad s.t. \quad \bm{F}_{\rm C}\in\mathcal{F},\nonumber\\
&\qquad \quad \  \bm{W}_{\rm RF}\in\mathcal{W}.\nonumber
\end{empheq}
\end{subequations}
Below, $\mathcal{P}.1\mbox{-}1$ and $\mathcal{P}.1\mbox{-}2$ are solved with the configuration of MBS and MAS, respectively.
\subsubsection{Analog-part for {MBS}}
{Firstly, the following proposition illustrates that  $\mathcal{P}.1\mbox{-}1$ can be transformed into a series of parallel optimization problems.
\begin{proposition}\label{parallel}
Solving $\mathcal{P}.1\mbox{-}1$ is equivalent to optimizing each column of $\bm{F}_{\rm S}$ individually as follows:
\begin{subequations}
    \begin{align}
&\mathop{\max}_{\bm{F}_{\rm S}[:,l]}\lvert\bm{a}_{N_{\rm t}}^{\rm H}(\theta_l)\bm{F}_{\rm S}[:,l]\rvert\nonumber\\
&\ s.t. \quad \bm{F}_{\rm S}[:,l]\in \mathcal{F}_{N_{\rm t}}.\nonumber
\end{align}
\end{subequations}
\end{proposition}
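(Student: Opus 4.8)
The plan is to exploit two structural facts: that ${\rm MSE_S}$ is additively separable across the columns of $\bm{F}_{\rm S}$, and that under the MBS configuration the feasible set $\mathcal{F}$ is a product set in which each column is chosen independently from the DFT codebook $\mathcal{F}_{N_{\rm t}}$. First I would substitute the unoptimized digital sensing precoder $\bm{P}_{\rm S}={\rm diag}(\bm{t})$, so that $b_i=t_i\ge 0$ and the $i$-th beampattern entry reduces to $v_i=t_i\lvert\bm{a}_{N_{\rm t}}^{\rm H}(\theta_i)\bm{F}_{\rm S}[:,i]\rvert$. Expanding the objective,
\begin{align}
{\rm MSE_S}(\bm{F}_{\rm S})=\sum_{i=1}^{W}d_i t_i^2\left(\lvert\bm{a}_{N_{\rm t}}^{\rm H}(\theta_i)\bm{F}_{\rm S}[:,i]\rvert-1\right)^2,\nonumber
\end{align}
makes explicit that the $i$-th summand depends only on the $i$-th column of $\bm{F}_{\rm S}$. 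Since the MBS constraint couples no two columns, the joint minimization decomposes into $W$ independent per-column problems.

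The second step reduces each per-column problem to the stated magnitude maximization. Fixing column $l$, the weight $d_l t_l^2\ge 0$ is a constant, so minimizing the $l$-th summand is equivalent to minimizing $(g_l-1)^2$, where $g_l:=\lvert\bm{a}_{N_{\rm t}}^{\rm H}(\theta_l)\bm{F}_{\rm S}[:,l]\rvert$. Here I would invoke the normalization conventions of the model: the steering vector has unit norm by the definition $\bm{a}_N(\theta)[i]=\frac{1}{\sqrt{N_{\rm t}}}e^{-j\pi(i-1)\sin(\theta)}$, and every codeword of the unitary DFT codebook $\mathcal{F}_{N_{\rm t}}$ is likewise unit-norm. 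Cauchy--Schwarz then yields $g_l\le 1$, so $g_l\in[0,1]$, an interval on which $(g_l-1)^2$ is strictly decreasing. Consequently minimizing $(g_l-1)^2$ over $\mathcal{F}_{N_{\rm t}}$ is equivalent to maximizing $g_l$ over $\mathcal{F}_{N_{\rm t}}$, which is precisely the per-column program in the statement.

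The one point requiring care is establishing the upper bound $g_l\le 1$: it is this bound that lets the deviation-from-$t_l$ objective collapse into a pure magnitude maximization, since $(g_l-1)^2$ is monotone only on the side $g_l\le 1$. Were the codewords not unit-norm, the reduction would fail and the objective would cease to be monotone in $g_l$. Everything else is assembly: because both the objective and the feasible set separate across columns, concatenating the $W$ individually optimal columns produces a global minimizer of ${\rm MSE_S}$, and conversely any global minimizer must be column-wise optimal, which establishes the claimed equivalence.
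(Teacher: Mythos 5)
Your proof is correct and follows essentially the same route as the paper's: substitute $\bm{P}_{\rm S}={\rm diag}(\bm{t})$, observe that ${\rm MSE_S}$ separates into per-column terms, and use $0\le\lvert\bm{a}_{N_{\rm t}}^{\rm H}(\theta_l)\bm{F}_{\rm S}[:,l]\rvert\le 1$ so that minimizing each squared deviation $(g_l-1)^2$ is equivalent to maximizing $g_l$. You merely make explicit what the paper leaves implicit --- the Cauchy--Schwarz justification of the unit upper bound from the normalization conventions and the product structure of the feasible set --- and your weight $d_l t_l^2$ is in fact the correct coefficient (the paper's displayed $d_l t_l$ appears to be a typo, immaterial to the argument since both are nonnegative constants).
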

\begin{proof}
Under an unoptimized digital part, $\bm{P}_{\rm S}={\rm{diag}}\left(\bm{t}\right)$. Then the objective function can be rewritten as
\begin{align}\label{MSE_S}
{\rm MSE_S}=\sum_{l=1}^{W}d_lt_l(\lvert\bm{a}_{N_{\rm t}}^{\rm H}(\theta_l)\bm{F}_{\rm S}[:,l]\rvert-1)^2.
\end{align}
Considering that $0\leq\lvert\bm{a}_{N_{\rm t}}^{\rm H}(\theta_l)\bm{F}_{\rm S}[:,l]\rvert \leq 1$, minimizing Eq. (\ref{MSE_S}) is equivalent to maximizing $\lvert\bm{a}_{N_{\rm t}}^{\rm H}(\theta_l)\bm{F}_{\rm S}[:,l]\rvert$ parallelly.
\end{proof}}
\vspace{-3mm}
The optimal solution is obtained by exhaustive search and the computational complexity is $\mathcal{O}(WN_{\rm t})$, which is {obviously} acceptable.
Considering the sensing beam set as $\Omega=\{\bm{F}_{\rm S}[:,1],...,\bm{F}_{\rm S}[:,W]\}$,  communication transmitting beams should be selected from {set difference} $\left\{\mathcal{F}_{N_{\rm t}}\backslash
\Omega\right\}$ to avoid sensing interference on the communication receiver.

For $\mathcal{P}.1\mbox{-}2$, the optimization problem is reformulated as
\begin{subequations}
    \begin{align} \mathop{\min}_{\bm{F}_{\rm C},\bm{W}_{\rm RF}}&\overline{\rm MSE}_{\rm C}(\bm{H}_{\rm C},\bm{H}_{\rm S})\nonumber\\
s.t. \quad &\bm{F}_{\rm C}[:,i]\in\left\{\mathcal{F}_{N_{\rm t}}\backslash
\Omega\right\},\nonumber\\
&\bm{W}_{\rm RF}[:,i]\in\mathcal{F}_{N_{\rm r}}.\nonumber
    \end{align}
\end{subequations}
Similar to \cite{gao2019spatial}, $\bm{H}_{\rm C}$ and $\bm{H}_{\rm S}$ are the submatrices of beamspace channel $\bm{\bar{H}}=\bm{F}_{N_{\rm r}}^H\bm{H}\bm{F}_{N_{\rm t}}$.
Specifically, the indices of the selected sub-columns and sub-rows indicate the selected DFT codewords for transmitted and received beams respectively.
Then the above problem can be regarded as determining a $K\times K$ submatrix of $\bm{\bar{H}}$.

Considering the exponential time complexity of the exhaustive search, we proposed a two-stage alternative to lower complexity.
Firstly, a set of $L$ largest elements in $\bm{\bar{H}}$ are selected as candidate beam pairs.
Secondly, the final columns and rows are chosen within the scope of these $L$ candidate beam pairs using the min-MSE criterion. 
Then the overall computational complexity is reduced from $\mathcal{O}(C_{N_{\rm r}}^K C_{N_{\rm t}-W}^K)$ to $\mathcal{O}(N_r^2(N_t-W)^2+C_L^K)$.

\subsubsection{Analog-part for {MAS}}
{Compared to MBS, MAS has higher degrees of freedom for more flexible beam patterns and better EDC.
Below, we resort to the MAS with $B$-bit PSs for analog part design.}
In this case,  $\mathcal{P}.1\mbox{-}1$ and $\mathcal{P}.1\mbox{-}2$ are integer programming problems, and the optimal solution can be solved by brutal search. 
However, it is impractical due to the exponential growth of time complexity with both the number of bits and antennas. 
Thus low-complexity methods are proposed for these two problems.

For $\mathcal{P}.1\mbox{-}1$, according to proposition \ref{parallel}, it is equivalent to solving the following problems parallelly.
\noindent
\begin{subequations}
    \begin{align}
&\mathop{\max}_{\bm{F}_{\rm S}[:,l]}\lvert\bm{a}_{N_{\rm t}}^{\rm H}(\theta_l)\bm{F}_{\rm S}[:,l]\rvert\nonumber\\
&\ s.t. \quad \bm{F}_{\rm S}[i,l]\in\frac{e^{j\mathcal{B}}}{\sqrt{N_{\rm t}}}, \forall i.\nonumber
\end{align}
\end{subequations}
{Since the definition domain of the variable is finite, it can be solved by the widely-used branch and bound (B$\&$B) algorithm \cite{morrison2016branch}, which adopts tree search strategy but applies pruning rules to skip suboptimal regions of the tree.
The tree has $N_{\rm t}+1$ levels, and the $n\mbox{-}$th level branch represents the value of $n\mbox{-}$th PSs, which has $2^B$ child nodes.
We perform a breadth-first search with $N_{\rm t}$ iterations, during which it maintains the currently best available solution $\bm{q}^*$, global lower bound $\mathcal{L}$ and available set $\mathcal{G}$. 
The overall algorithm is summarized in Algorithm 1.
At initialization, $\mathcal{G}$ is consist of the root note $v_0$,  $\bm{q}^*$ is randomly set as $\bm{q}_0$ and $\mathcal{L}$ is initialized as $\mathcal{L}_0=\lvert\bm{a}_{N_{\rm t}}^{\rm H}(\theta_l)\bm{q}_0\rvert$.
In the $n\mbox{-}$th iteration, for each new node, its upper bound and lower bound are calculated for updates.}

For each node $v$, denote the first $n$ elements of $\bm{F}_{\rm S}[:,l]$ as $\bm{q}_{\rm L}\in \mathbb{C}^{n\times 1}$ and the remained $N_{\rm t}-n$ elements as $\bm{q}_{\rm R}\in \mathbb{C}^{(N_{\rm t}-n)\times 1}$.
The objective function can be rewritten as $f=\left\lvert\bm{a}_{N_{\rm t}}^{\rm H}(\theta_l)[1:n]\bm{q}_{\rm L}+\bm{a}_{N_{\rm t}}^{\rm H}(\theta_l)[n+1:N_{\rm t}]\bm{q}_{\rm R}\right\rvert$.
According to triangle inequality, the upper bound can be achieved as 
\begin{align}\label{f_UB}
f_{\rm UB}=\lvert\bm{a}_{N_{\rm t}}^{\rm H}(\theta_l)[1:n]\bm{q}_{\rm L}\rvert+N_{\rm t}-n
\end{align}
with $\bm{q}_{\rm R}=\bm{q}_{\rm R,UB}$.
The $i\mbox{-}$th phase of  $\bm{q}_{\rm R,UB}$ satisfies
\begin{align}
\angle\bm{q}_{\rm R,UB}[i]=\angle(\bm{a}_{N_{\rm t}}^{\rm H}(\theta_l)[1:n]\bm{q}_{\rm L})+\angle(\bm{a}_{N_{\rm t}}(\theta_l)[i]),
\end{align}
where $\angle(x)$ represents the phase of $x$.
If $f_{\rm UB}$ is lower than the current global lower bound $\mathcal{L}$, all leaves below node 
$v$ are suboptimal and will be pruned.
{Otherwise, feasible $\bm{q}_{\rm R}$ will be obtained as $\bm{q}_{\rm R, LB}$ by quantifying $\bm{q}_{\rm R,UB}$ nearby with given $B$ bit resolution, i.e.,
\begin{align}
\bm{q}_{\rm R,LB}=\mathop{\arg\min}\limits_{\bm{q}_{R}[i]\in \frac{e^{j\mathcal{B}}}{\sqrt{N_{\rm t}}},\forall i}\Vert\bm{q}_{\rm R}-\bm{q}_{\rm R,UB}\Vert_2.
\end{align}}
Meanwhile, the lower bound can be obtained as
\begin{align}\label{f_LB}
f_{\rm LB}=\left\lvert\bm{a}_{N_{\rm t}}^{\rm H}(\theta_l)[1:n]\bm{q}_{\rm L}+\bm{a}_{N_{\rm t}}^{\rm H}(\theta_l)[n+1:N_{\rm t}]\bm{q}_{\rm R,LB}\right\rvert.
\end{align}
If $f_{\rm LB}$ is larger than $\mathcal{L}$, $\mathcal{L}$ will be updated as $f_{\rm LB}$ and $\bm{q}^*$ should be updated as $[\bm{q}_{\rm L}^{\rm T}, \bm{q}_{\rm R,LB}^{\rm T}]^{\rm T}$. 
It is noteworthy that the solution of B$\&$B algorithm is optimal with reduced computation time in comparison with the exhaustive search.
\begin{algorithm}[t]
\caption{{Branch and Bound Algorithm for Optimizing Analog Sensing Precoder}}
\label{B&B}
\hspace*{0.02in} {\bf Input:}
$\bm{a}(\theta_l)$.
\begin{algorithmic}[1]
\State \textbf{Initialization}: $\mathcal{G}=\{v_0\}$, $\bm{q}^*=\bm{q}_0$ and $\mathcal{L}=\mathcal{L}_0$.
\For {$n=1:N_{\rm t}$}
\State Replace each node in $\mathcal{G}$ with its descendant nodes.
\For{$v\in\mathcal{G}$}
\State Calculate $f_{\rm UB}$ and $f_{\rm LB}$ by Eq. (\ref{f_UB}) and Eq. (\ref{f_LB}).
\State Update $\mathcal{G}$ as $\mathcal{G}\backslash \{v\}$ if $f_{\rm UB}\le \mathcal{L}$.
\State Update $\mathcal{L}$ and $\bm{q}^*$ if $f_{\rm LB}>\mathcal{L}$.
\EndFor
\EndFor
\end{algorithmic}
\hspace*{0.02in} {\bf Output:} $\bm{F}_{\rm S}[:,l]=\bm{q}^*$.
\end{algorithm}
For $\mathcal{P}.1\mbox{-}2$, it has been shown \cite{wang2018hybrid} that entry-wise iteration can be a low-complexity effective method for finite resolution PSs case. In detail, for columns $z$ from 1 to $K$, the $\bm{F}_{\rm C}[:,z]$ and $\bm{W}_{\rm RF}[:,z]$ are optimized successively.
For each column, each entry is optimized to minimize objective function while keeping the others fixed until convergence. 
For example, $\bm{F}_{\rm C}[i,z]$ can be updated by solving
\noindent
\begin{subequations}
    \begin{align} \mathop{\min}_{\bm{F}_{\rm C}[i,z]}&\overline{\rm MSE}_{\rm C}\nonumber\\
s.t. \ \  &\bm{F}_{\rm C}[i,z]\in\frac{e^{j\mathcal{B}}}{\sqrt{N_{\rm t}}}.\nonumber
\end{align}
\end{subequations}

\vspace{-1mm}
Considering that appropriate initialization can improve the performance of the proposed iteration algorithm \cite{yu2022dynamic}, an improved initializer is proposed below, especially for the low-bit case. 
Firstly, the following proposition illustrates the transformation of the initial problem.
\vspace{-2mm}
\begin{proposition}\label{MISDP}
{As the SNR increases, $\mathcal{P}.1\mbox{-}2$ with $B$-bit PSs tends to be asymptotically equivalent to the following mixed-integer semi-definite programming (MISDP) problem:}
\noindent
\begin{subequations}
\begin{align}
\mathop{\min}_{w,\bm{F}_{\rm C},\bm{W}_{\rm RF}}&w\\
s.t.\ \quad &\bm{Z}=
\begin{bmatrix}
\frac{w}{K+W}\bm{I}_{K+W} & \bm{G} \\ \bm{G}^{\rm H} & \bm{J}
\end{bmatrix}
\succeq 0,\label{SDP constraint} \\ 
&\bm{F}_{\rm C}[i,j]\in\frac{e^{j\mathcal{B}}}{\sqrt{N_{\rm t}}},\forall i,j,\\
&\bm{W}_{\rm RF}[i,j]\in\frac{e^{j\mathcal{B}}}{\sqrt{N_{\rm t}}},\forall i,j,
\end{align}
\end{subequations}
where
\vspace{-1mm}
\begin{align}
\bm{G}=
\begin{bmatrix}
{\rm{diag}}(\bm{t})\bm{D}^{\frac{1}{2}}\bm{F}_{\rm S}\bm{H}^{\rm H}\bm{W}_{\rm RF}\nonumber\\
\sigma \bm{I}_K
\end{bmatrix},
\end{align}
and 
\vspace{-2mm}
\begin{align}
\bm{J}=\bm{W}_{\rm RF}^{\rm H}\bm{H}\bm{F}_{\rm C}\bm{F}_{\rm C}^{\rm H}\bm{H}^{\rm H}\bm{W}_{\rm RF}.\nonumber
\end{align}
\end{proposition}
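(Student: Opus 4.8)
The plan is to collapse the explicit three-term objective $\overline{\rm MSE}_{\rm C}$ into a single trace of the form ${\rm tr}(\bm{G}\bm{J}^{-1}\bm{G}^{\rm H})$ and then recognize that trace as the Schur complement hidden in $\bm{Z}\succeq 0$. First I would use the fact that $\bm{W}_{\rm BB,0}$ is exactly the LMMSE filter for the linear model with signal covariance $\bm{R}_{\bar{\bm{x}}_{\rm C}}=\frac{N_{\rm C}}{K}\bm{I}_K$ and interference-plus-noise covariance $\bm{R}_{\rm IN}=\bm{H}_{\rm S}({\rm diag}(\bm{t}))^2\bm{D}\bm{H}_{\rm S}^{\rm H}+\sigma^2\bm{I}_K$. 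By the standard LMMSE error identity (matrix inversion lemma), the three explicit terms of $\overline{\rm MSE}_{\rm C}$ collapse to $\overline{\rm MSE}_{\rm C}={\rm tr}\!\left[\left(\frac{K}{N_{\rm C}}\bm{I}_K+\bm{H}_{\rm C}^{\rm H}\bm{R}_{\rm IN}^{-1}\bm{H}_{\rm C}\right)^{-1}\right]$, the compact representation I will work from.

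Next I would invoke the high-SNR regime. Because the communication beams are chosen to avoid the sensing directions, $\bm{H}_{\rm C}$ lives essentially in the null space of the sensing interference, so the eigenvalues of $\bm{H}_{\rm C}^{\rm H}\bm{R}_{\rm IN}^{-1}\bm{H}_{\rm C}$ grow like $1/\sigma^2$ as $\sigma^2\to0$ and dominate the constant prior term $\frac{K}{N_{\rm C}}\bm{I}_K$. Discarding that term with vanishing relative error gives $\overline{\rm MSE}_{\rm C}\to{\rm tr}[(\bm{H}_{\rm C}^{\rm H}\bm{R}_{\rm IN}^{-1}\bm{H}_{\rm C})^{-1}]$ (equivalently, the LMMSE filter tends to the zero-forcing filter $\bm{H}_{\rm C}^{-1}$). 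Since $\bm{H}_{\rm C}$ is the $K\times K$ EDC and generically invertible, I would use $(\bm{H}_{\rm C}^{\rm H}\bm{R}_{\rm IN}^{-1}\bm{H}_{\rm C})^{-1}=\bm{H}_{\rm C}^{-1}\bm{R}_{\rm IN}\bm{H}_{\rm C}^{-{\rm H}}$ together with the cyclic property to rewrite the limit as ${\rm tr}(\bm{R}_{\rm IN}(\bm{H}_{\rm C}\bm{H}_{\rm C}^{\rm H})^{-1})$. A direct block computation, exploiting $\bm{H}_{\rm S}^{\rm H}=\bm{F}_{\rm S}^{\rm H}\bm{H}^{\rm H}\bm{W}_{\rm RF}$ and commutativity of diagonal factors, then shows $\bm{G}^{\rm H}\bm{G}=\bm{R}_{\rm IN}$ and $\bm{J}=\bm{H}_{\rm C}\bm{H}_{\rm C}^{\rm H}$, so the high-SNR objective is exactly ${\rm tr}(\bm{G}^{\rm H}\bm{G}\bm{J}^{-1})={\rm tr}(\bm{G}\bm{J}^{-1}\bm{G}^{\rm H})$.

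Finally I would cast this trace minimization into the stated MISDP. Introducing the epigraph scalar $w$ through the linear matrix inequality $\frac{w}{K+W}\bm{I}_{K+W}\succeq\bm{G}\bm{J}^{-1}\bm{G}^{\rm H}$ and applying the Schur complement lemma (with $\bm{J}\succ0$) reproduces precisely the block constraint (\ref{SDP constraint}); carrying the finite-resolution phase-shifter constraints on $\bm{F}_{\rm C}$ and $\bm{W}_{\rm RF}$ through the substitution restores the integer constraints, so the whole problem is indeed mixed-integer semidefinite. Taking the trace of the LMI shows $w\ge{\rm tr}(\bm{G}\bm{J}^{-1}\bm{G}^{\rm H})=\overline{\rm MSE}_{\rm C}$, confirming that minimizing $w$ minimizes a valid surrogate for the high-SNR MSE.

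The main obstacle is making the word \emph{equivalent} precise, and there are two gaps I expect to require the most care. The first is the high-SNR step: I must verify that every relevant mode of $\bm{H}_{\rm C}^{\rm H}\bm{R}_{\rm IN}^{-1}\bm{H}_{\rm C}$ truly diverges, which is exactly where the beam-separation assumption $\bm{a}_{N_{\rm t}}^{\rm H}(\psi_i)\bm{F}_{\rm C}\simeq0$ and the invertibility of $\bm{H}_{\rm C}$ enter, so that dropping $\frac{K}{N_{\rm C}}\bm{I}_K$ is asymptotically exact rather than merely heuristic. The second, more delicate point is that the scalar-identity epigraph forces $w=(K+W)\lambda_{\max}(\bm{G}\bm{J}^{-1}\bm{G}^{\rm H})$ at the optimum rather than the trace itself, so strictly the LMI minimizes an upper bound on the MSE; I would therefore either argue that the spectrum of $\bm{G}\bm{J}^{-1}\bm{G}^{\rm H}$ concentrates at the optimum (so the largest eigenvalue and the normalized trace coincide) or state the asserted equivalence at the level of the optimizing pair $(\bm{F}_{\rm C},\bm{W}_{\rm RF})$, i.e.\ that the surrogate shares the minimizer of the high-SNR MSE.
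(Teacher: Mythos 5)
Your proposal is correct and takes essentially the same route as the paper's Appendix A: a high-SNR reduction of the LMMSE objective to ${\rm Tr}\left(\bm{G}\bm{J}^{-1}\bm{G}^{\rm H}\right)$ (the paper gets there by directly substituting $\bm{W}_{\rm BB,0}\simeq\bm{H}_{\rm C}^{\dagger}$, which is equivalent to your information-form/zero-forcing argument), followed by the same Schur-complement epigraph reformulation. The two caveats you flag --- that dropping the prior term requires the non-vanishing sensing interference to be avoided by the communication modes, and that the scalar-identity LMI forces $w=(K+W)\lambda_{\max}\left(\bm{G}\bm{J}^{-1}\bm{G}^{\rm H}\right)$ rather than the trace, so only an upper-bound surrogate is minimized --- are genuine, but the paper shares both of them, stating $\bm{W}_{\rm BB,0}\simeq\bm{H}_{\rm C}^{\dagger}$ without qualification and deferring the trace-versus-$w$ equivalence to cited references, so your treatment is if anything the more careful one.
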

\begin{proof}
    See Appendix \ref{A}.
\end{proof}
\vspace{-2mm}
Considering that $\bm{F}_{\rm C}$ and $\bm{W}_{\rm RF}$ are coupled in the nonlinear constraint (\ref{SDP constraint}), we solve the problem by alternatively optimizing $\bm{F}_{\rm C}$ and $\bm{W}_{\rm RF}$ until convergence.
Below, we take $\bm{F}_{\rm C}$ as an example to illustrate how to transform the constraint (\ref{SDP constraint}) into a linear matrix inequality (LMI) constraint.

Denote $\bm{a}=\frac{\pi}{2^{B-1}}[-(2^B-1),...,(2^B-1)]^{\rm T}$, $\bm{c}=\cos(\bm{a})$ and $\bm{s}=\sin(\bm{a})$. 
A series of binary vectors $\bm{x}^{i,j}\in\mathbb{C}^{(2^{B+1}-1)\times1}$ and $\bm{y}^{i,j,i^{\prime},j^{\prime}}\in\mathbb{C}^{(2^{B+1}-1)\times1}$ are introduced, where $\bm{x}^{i,j}[t]$ indicates whether $\angle\bm{F}_{\rm C}[i,j]$ is $\bm{a}[t]$ and $\bm{y}^{i,j,i^{\prime},j^{\prime}}[t]$ indicates whether $\angle\bm{F}_{\rm C}[i,j]-\angle\bm{F}_{\rm C}[i^{\prime},j^{\prime}]$ is $\bm{a}[t]$. 
\begin{lemma}
The problem in proposition \ref{MISDP} with fixed $\bm{W}_{\rm RF}$ can be equivalently transformed as 
\begin{align}
\mathop{\min}_{w,\bm{x}^{i,j},\bm{y}^{i,j,i^{\prime},j^{\prime}}}&w\nonumber\\
s.t.\quad \quad & (\ref{SDP constraint}), \Vert\bm{x}^{i,j}\Vert_1=1, \bm{e}^{\rm T}\bm{x}^{i,j}=0,\nonumber\\
&\bm{a}^{\rm T}(\bm{x}^{i,j}-\bm{x}^{i^{\prime},j^{\prime}})=\bm{a}^{\rm T}\bm{y}^{i,j,i^{\prime},j^{\prime}},\nonumber
\end{align}
where $\bm{e}\in\mathbb{C}^{(2^{B+1}-1)\times1}$ is a constant vector whose first $2^B-1$ elements are 1 and others are 0. $\bm{J}$ is the linear function of  $\bm{y}^{i,j,i^{\prime},j^{\prime}}$.
\end{lemma}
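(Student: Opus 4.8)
The plan is to turn the discrete phase-shifter requirement $\bm{F}_{\rm C}[i,j]\in e^{j\mathcal{B}}/\sqrt{N_{\rm t}}$ into linear constraints on binary selection vectors, and then to exploit the fact that, once $\bm{W}_{\rm RF}$ is fixed, the only non-constant block of $\bm{Z}$ is $\bm{J}=\bm{W}_{\rm RF}^{\rm H}\bm{H}\bm{F}_{\rm C}\bm{F}_{\rm C}^{\rm H}\bm{H}^{\rm H}\bm{W}_{\rm RF}$, whose dependence on $\bm{F}_{\rm C}$ enters only through the Gram matrix $\bm{F}_{\rm C}\bm{F}_{\rm C}^{\rm H}$. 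Since every entry of $\bm{F}_{\rm C}$ has fixed modulus $1/\sqrt{N_{\rm t}}$, the $(i,i')$ entry of $\bm{F}_{\rm C}\bm{F}_{\rm C}^{\rm H}$ is a scaled sum of unit-modulus phasors depending on the phases only through their pairwise differences $\angle\bm{F}_{\rm C}[i,j]-\angle\bm{F}_{\rm C}[i',j]$. The whole reformulation is organized around encoding these differences linearly.

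First I would formalize the one-hot encoding of individual phases. Because $\bm{a}$ lists all multiples of $2\pi/2^{B}$ from $-(2^{B}-1)$ to $(2^{B}-1)$ (step $\pi/2^{B-1}=2\pi/2^{B}$), its last $2^{B}$ entries are exactly the admissible phases $\mathcal{B}$ and its first $2^{B}-1$ entries are their negatives. I would then show that $\Vert\bm{x}^{i,j}\Vert_1=1$ together with $\bm{e}^{\rm T}\bm{x}^{i,j}=0$ forces $\bm{x}^{i,j}$ to be one-hot with support confined to the last $2^{B}$ coordinates, so that $\bm{a}^{\rm T}\bm{x}^{i,j}=\angle\bm{F}_{\rm C}[i,j]\in\mathcal{B}$, giving a bijection between feasible $\bm{x}^{i,j}$ and feasible phases and reproducing the finite-resolution PS constraint exactly.

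The key step is the linearization of $\bm{F}_{\rm C}\bm{F}_{\rm C}^{\rm H}$. I would read $\bm{y}^{i,j,i',j}$ as the one-hot indicator of the phase difference $\angle\bm{F}_{\rm C}[i,j]-\angle\bm{F}_{\rm C}[i',j]$, which always lies in the support of $\bm{a}$; the consistency constraint $\bm{a}^{\rm T}(\bm{x}^{i,j}-\bm{x}^{i',j})=\bm{a}^{\rm T}\bm{y}^{i,j,i',j}$ together with the one-hot structure pins $\bm{y}^{i,j,i',j}$ to this difference uniquely, since the entries of $\bm{a}$ are distinct. With $\bm{c}=\cos(\bm{a})$ and $\bm{s}=\sin(\bm{a})$, a one-hot $\bm{y}^{i,j,i',j}$ yields $\cos(\angle\bm{F}_{\rm C}[i,j]-\angle\bm{F}_{\rm C}[i',j])=\bm{c}^{\rm T}\bm{y}^{i,j,i',j}$ and the analogous sine identity, so that each entry $(\bm{F}_{\rm C}\bm{F}_{\rm C}^{\rm H})[i,i']$ decomposes into a real part $\frac{1}{N_{\rm t}}\sum_{j}\bm{c}^{\rm T}\bm{y}^{i,j,i',j}$ and an imaginary part $\frac{1}{N_{\rm t}}\sum_{j}\bm{s}^{\rm T}\bm{y}^{i,j,i',j}$, both affine in the binary variables. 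Substituting into $\bm{J}=\bm{W}_{\rm RF}^{\rm H}\bm{H}(\bm{F}_{\rm C}\bm{F}_{\rm C}^{\rm H})\bm{H}^{\rm H}\bm{W}_{\rm RF}$ and using that $\bm{W}_{\rm RF}^{\rm H}\bm{H}$ is constant then exhibits $\bm{J}$ as a linear function of the $\bm{y}^{i,j,i',j}$, which is the claim; since $\bm{G}$ is constant once $\bm{W}_{\rm RF}$ and $\bm{F}_{\rm S}$ are fixed, constraint (\ref{SDP constraint}) becomes an LMI that is linear in $w$ and in the binary variables.

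Finally I would close the equivalence in both directions: any feasible $\bm{F}_{\rm C}$ induces one-hot $\bm{x}^{i,j},\bm{y}^{i,j,i',j}$ satisfying every listed constraint with identical $\bm{J}$ and hence identical objective $w$, and conversely any feasible $(\bm{x},\bm{y})$ reconstructs an admissible $\bm{F}_{\rm C}$ yielding the same $\bm{Z}$. I expect the main obstacle to be the bookkeeping around the phase-difference encoding: one must verify that the consistency constraint, which acts only through $\bm{a}^{\rm T}(\cdot)$, is strong enough to force $\bm{y}^{i,j,i',j}$ to be the \emph{correct} one-hot indicator (so that $\bm{c}^{\rm T}\bm{y}$ and $\bm{s}^{\rm T}\bm{y}$ are genuinely the cosine and sine of the difference), and that the bijection is untouched by the $2\pi$-periodicity of phases, which holds precisely because the range of $\bm{a}$ already covers every realizable difference of two elements of $\mathcal{B}$ without wrap-around.
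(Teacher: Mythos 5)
Your proposal is correct and follows essentially the same route as the paper's proof: the heart of both arguments is writing each entry of the Gram matrix as $\bar{\bm{F}}[i,i']=\sum_{k}\bm{c}^{\rm T}\bm{y}^{i,k,i',k}+j\,\bm{s}^{\rm T}\bm{y}^{i,k,i',k}$, so that $\bm{J}=\bm{W}_{\rm RF}^{\rm H}\bm{H}\bar{\bm{F}}\bm{H}^{\rm H}\bm{W}_{\rm RF}$ is linear in the binary variables and the constraint becomes an LMI. Your additional verification—that $\Vert\bm{x}^{i,j}\Vert_1=1$ with $\bm{e}^{\rm T}\bm{x}^{i,j}=0$ confines the support to the angles of $\mathcal{B}$, that the consistency constraint pins the one-hot $\bm{y}^{i,j,i',j'}$ uniquely because the entries of $\bm{a}$ are distinct, and that the range of $\bm{a}$ covers all pairwise differences without $2\pi$ wrap-around—is sound bookkeeping that the paper leaves implicit rather than a different argument.
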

\begin{proof}
    See Appendix \ref{LMI}.
\end{proof}
\vspace{-2mm}
Similarly, for the solution of  $\bm{W}_{\rm RF}$, we can transform
$\bm{G}$ and $\bm{J}$ to the  linear function of $\bm{x}^{i,j}$ and $\bm{y}^{i,j,i^{\prime},j^{\prime}}$.
The above problem is a MISDP problem with linear constraints and can be solved using the outer approximation method, which can seek existing optimization toolboxes such as YALMIP \cite{lofberg2004yalmip}. 
The algorithm is summarized as Algorithm 2.
\begin{algorithm}[t]
\caption{Entry-wise Iteration Algorithm with MISDP Initialization for Optimizing Communication's Analog Part}
\label{alg MISDP}
\hspace*{0.02in} {\bf Input:}
$\bm{H}$, $\bm{F}_{\rm S}$.
\begin{algorithmic}[1]
\State \textbf{Initialization}: $\bm{F}_{\rm C}=\bm{1}_{N_{\rm t}\times K}$,$\bm{W}_{\rm RF}=\bm{1}_{N_{\rm r}\times K}$.
\State Alternatively optimize $\bm{F}_{\rm C}$ and $\bm{W}_{\rm RF}$ by solving MISDP problem until convergence to obtain an initial value
\For {z=1:$K$}
\Repeat
\For {$i=1:N_{\rm r}$}
\State Optimize $\bm{F}_{\rm C}[i,z]$ to minimize $\bar{\chi}$.
\EndFor
\For {$i=1:N_{\rm t}$}
\State Optimize $\bm{W}_{\rm RF}[i,z]$ to minimize $\bar{\chi}$.
\EndFor
\Until Convergence.
\EndFor
\end{algorithmic}
\hspace*{0.02in} {\bf Output:} $\bm{F}_{\rm C}$,$\bm{W}_{\rm RF}$.
\end{algorithm}
\vspace{-3mm}
\subsection{{{Digital-part Optimization for BPM-ISAC}}}
\vspace{-1mm}
With the fixed analog part, the original optimization problem in Section \ref{original opt.} is transformed as follows:
\begin{subequations}\label{digital}
\begin{empheq}[box=\fbox]{align}
\mathcal{P}.2:~&\textbf{Optimization for communication and}\nonumber\\
&\quad \textbf{sensing's digital part}\nonumber\\
\min_{\substack{\bm{P}_{\rm C},\bm{P}_{\rm S}}} \enspace & {\rm MSE_S}(\bm{P}_{\rm S}) \label{digital opject}\nonumber\\
 s.t.\ \  & \left(\ref{MSE Cons}\right)-\left(\ref{radar power0}\right).\nonumber
\end{empheq}
\end{subequations}
{Since $\bm{P}_{\rm S}$, $\bm{P}_{\rm C}$ and $\bm{W}_{\rm BB}$ are coupled in Eq. (\ref{m}) in a non-convex manner, $\mathcal{P}.2$ is still non-convex.
Thus an alternating optimization algorithm is proposed to alternatively update \{$\bm{P}_{\rm S}$, $\bm{P}_{\rm C}$\} and $\bm{W}_{\rm BB}$}. For initialization, we set $\bm{P}_{\rm S}$, $\bm{P}_{\rm C}$, and $\bm{W}_{\rm BB}$ as ${\rm diag}(\bm{t})$, $\bm{I}_K$, and $\bm{W}_{\rm BB,0}$, respectively. For each iteration, the following two steps are executed in order.\\
{1) Update $\bm{P}_{\rm S}$ and $\bm{P}_{\rm C}$ with fixed $\bm{W}_{\rm BB}$. $\mathcal{P}.2$ can be recast into the following problem with $\bm{p}$ and $\bm{b}$ as variables.
\begin{equation}
    \begin{aligned}
\min_{\bm{b},\bm{p}} \enspace &\sum_{i=1}^W d_i(\lvert\bm{a}_{N_{\rm t}}^{\rm H}(\theta_i)\bm{F}_{\rm S}[:,i]\rvert b_i-t_i)^2\label{QCQP}\\
      s.t. \ \  &\Vert{\rm{diag}}(\bm{p})\Vert_F^2\le K,\\
      &\Vert\bm{D}^{\frac{1}{2}}\bm{b}\Vert_F^2\le T_{\rm R}, \\
      & \frac{N_{\rm C}}{K}\Vert\bm{W}_{\rm BB}\bm{H}_{\rm C}{\rm{diag}}(\bm{p})-\bm{I}_K\Vert_F^2\\
      & +\Vert\bm{D}^\frac{1}{2}\bm{W}_{\rm BB}\bm{H}_{\rm S}{\rm{diag}}(\bm{b})\Vert_F^2 +\sigma^2\Vert\bm{W}_{\rm BB}\Vert_F^2 \le \Gamma.
    \end{aligned}
\end{equation}}
Since both the objective function and constraints are quadratic, the above problem is a convex quadratically constrained quadratic program (QCQP) problem, which can be solved via convex optimization toolbox \cite{lofberg2004yalmip}.\\
2) Update $\bm{W}_{\rm BB}$ with fixed $\bm{P}_{\rm S}$ and $\bm{P}_{\rm C}$ according to Eq. (\ref{m}).
The overall alternating algorithm is presented in Algorithm \ref{alg digital}.

\begin{algorithm}[t]
\caption{Alternating Optimization Algorithm for Optimizing Communication and Sensing's Digital Part}
\label{alg digital}
\hspace*{0.02in} {\bf Input:} 
$\bm{H}_{\rm C}$, $\bm{H}_{\rm S}$, $N_{\rm C}$, $K$, $\sigma$.
\begin{algorithmic}[1]
\State Set $\bm{P}_{\rm S}^{(0)}={\rm diag}(\bm{t})$, $\bm{P}_{\rm C}^{(0)}=\bm{I}_K$ and $\bm{W}_{\rm BB}^{(0)}=\bm{W}_{\rm BB,0}$.
\Repeat
\State {Update $\bm{P}_{\rm S}^{(i+1)}$ and $\bm{P}_{\rm C}^{(i)}$ with fixed $\bm{W}_{\rm BB}^{(i)}$ by solving QCQP problem (\ref{QCQP}).}
\State Update $\bm{W}_{\rm BB}^{(i+1)}$ with $\bm{P}_{\rm S}^{(i+1)}$ and $\bm{P}_{\rm C}^{(i+1)}$ as Eq. (\ref{m}).
\Until the value of the objective function converges.
\end{algorithmic}
\hspace*{0.02in} {\bf Output:} $\bm{P}_{\rm S}$, $\bm{P}_{\rm C}$, $\bm{W}_{\rm BB}$.
\end{algorithm}
\section{{ISAC Performance Analysis}}
{In this section, the complexity and convergence performance of the proposed algorithm are theoretically analyzed.
The APEP and CRB are derived to illustrate the theoretical performance of communication and sensing.
In addition, the number of RF chains for sensing are discussed.}
\vspace{-3mm}
\subsection{Complexity Analysis}
The complexity of the analog part design with MBS structures has been derived in Section \ref{section analog}.
For algorithm 1, the worst-case theoretical complexity of the B$\&$B algorithm is $\mathcal{O}(2^{BN_t})$, but the pruning rules can substantially reduce actual solving time.
For algorithm 2, the overall complexity includes the initialization process and entry-wise iteration. 
For initialization,  the problem in proposition 2 can be transformed into mix-integer linear programming problems and solved by the branch and bound algorithm, the complexity of which is $\mathcal{O}(2^{BN_{\rm t}K})$.
Since the initialization scheme is only applied to low-bit cases, the complexity is acceptable.
The complexity of the entry-wise iteration part is $\mathcal{O}(N_{\rm iter}K(N_{\rm t}+N_{\rm r})2^B)$, where $N_{\rm iter}$ denotes the number of iterations.
For algorithm 3, the complexity of solving QCQP problems is
$\mathcal{O}(N_{\rm iter}^{'}(T^{3.5}+W^{3.5}){\rm log}(1/\epsilon))$ \cite{luo2010semidefinite} by the interior-point method given accuracy level $\epsilon$, where $N_{\rm iter}^{'}$ is the number of iteration rounds.
\subsection{Convergence Analysis}
Algorithm \ref{B&B} has a finite number of operational steps.
Algorithm \ref{alg MISDP} converges because the objective function is non-increasing and is lower-bounded by 0.
For Algorithm \ref{alg digital}, the convergence and existence of the solution are not obvious and analyzed as below.
For the first iteration, it can be observed that $\bm{b}^{(1)}=\mu \bm{t}$ and $\bm{p}^{(1)}=\bm{p}^{(0)}$ are the feasible solution for the first step.
Therefore, a solution must exist in the first iteration.
Suppose after $i\mbox{-}$th iteration, all constraints are satisfied. 
During $(i+1)\mbox{-}$th iteration, denote the objective value at step $j$ as $\varepsilon_j$. For step 1), we have
\begin{align}
\varepsilon_1(\bm{b}^{(i+1)})\le\varepsilon_1(\bm{b}^{(i)}),
\end{align}
 and all constraints except for Eq. (\ref{m}) are satisfied. 
 After step 2), Eq. (\ref{m}) is satisfied and 
  \begin{align}
\varepsilon_3(\bm{b}^{(i+1)})=\varepsilon_2(\bm{b}^{(i+1)}).
  \end{align} It is worth noting that the constraint (\ref{MSE Cons}) is still satisfied since $\bm{W}_{\rm BB}^{(i+1)}$ is the LMMSE equalizer, which further lowers the symbol MSE.
 Therefore, after $(i+1)\mbox{-}$th iteration, we have
 \begin{align}
     \varepsilon_3(\bm{b}^{(i+1)}) \le\varepsilon_1(\bm{b}^{(i)}),
 \end{align}
 i.e., the objective function is non-increasing and all constraints are satisfied. 
Recalling that the objective value is lower bounded, the convergence of the proposed alternating optimization is guaranteed.
{For the optimality of the convergence point, we have the following proposition.
\begin{proposition}
It can be proven that the convergence point of Algorithm \ref{alg digital} satisfies the Karush-Kuhn-Tucher (KKT) conditions of $\mathcal{P}.2$.
\end{proposition}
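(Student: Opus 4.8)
The plan is to recast Algorithm \ref{alg digital} as an exact two-block coordinate descent on a lifted problem in which the equalizer is an explicit optimization variable, and then show that a fixed point of the iteration supplies a complete set of KKT multipliers. Introduce the explicit symbol-MSE function
\begin{align}
J(\bm{P}_{\rm C},\bm{P}_{\rm S},\bm{W}_{\rm BB})=&\frac{N_{\rm C}}{K}\Vert\bm{W}_{\rm BB}\bm{H}_{\rm C}\bm{P}_{\rm C}-\bm{I}_K\Vert_F^2\nonumber\\
&+\Vert\bm{D}^\frac{1}{2}\bm{W}_{\rm BB}\bm{H}_{\rm S}\bm{P}_{\rm S}\Vert_F^2+\sigma^2\Vert\bm{W}_{\rm BB}\Vert_F^2,\nonumber
\end{align}
so that constraint (\ref{MSE Cons}) reads ${\rm MSE_C}=\min_{\bm{W}_{\rm BB}}J\le\Gamma$, the minimizer being precisely the LMMSE equalizer of Eq. (\ref{m}). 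Because $J$ is strictly convex quadratic in $\bm{W}_{\rm BB}$ (the term $\sigma^2\Vert\bm{W}_{\rm BB}\Vert_F^2$ forces strict convexity), this minimizer is unique. I would then consider the lifted problem that minimizes ${\rm MSE_S}(\bm{P}_{\rm S})$ over $(\bm{P}_{\rm C},\bm{P}_{\rm S},\bm{W}_{\rm BB})$ subject to (\ref{comm power}), (\ref{radar power0}) and $J\le\Gamma$. Since the objective is independent of $\bm{W}_{\rm BB}$ and the equalizer enters only through $J\le\Gamma$, partial minimization over $\bm{W}_{\rm BB}$ shows the lifted problem is equivalent to $\mathcal{P}.2$.

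First I would match the two steps of the algorithm to the two blocks. Step 1) solves the convex QCQP (\ref{QCQP}) to global optimality in $(\bm{p},\bm{b})$ with $\bm{W}_{\rm BB}$ frozen; step 2) sets $\bm{W}_{\rm BB}$ to the unconstrained minimizer of $J$, i.e. the LMMSE equalizer. At a convergence point $(\bm{p}^\star,\bm{b}^\star,\bm{W}_{\rm BB}^\star)$ neither step moves the iterate, so two facts hold simultaneously: (i) since $\bm{W}_{\rm BB}^\star$ minimizes $J$, we have $\nabla_{\bm{W}_{\rm BB}}J=0$, which is exactly the lifted-problem stationarity in the $\bm{W}_{\rm BB}$ block (the objective and power constraints contribute nothing, so $\nabla_{\bm{W}_{\rm BB}}J=0$ holds irrespective of the multiplier on $J\le\Gamma$); and (ii) global optimality of the QCQP, together with a constraint qualification such as Slater's condition (a feasible point is available, cf. the $\bm{b}=\mu\bm{t}$ construction used in the convergence proof), yields multipliers $\lambda\ge0$ for $J\le\Gamma$ and $\nu_1,\nu_2\ge0$ for the power constraints that satisfy stationarity in $(\bm{p},\bm{b})$ and complementary slackness, all evaluated at the frozen $\bm{W}_{\rm BB}^\star$.

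Assembling (i) and (ii) yields a point satisfying every KKT condition of the lifted problem. To finish I would project back onto $\mathcal{P}.2$: by the envelope (Danskin) theorem, and because the inner minimizer is unique, $\nabla_{(\bm{p},\bm{b})}{\rm MSE_C}=\nabla_{(\bm{p},\bm{b})}J\big|_{\bm{W}_{\rm BB}^\star}$, so the gradient of the reduced constraint coincides with the partial gradient appearing in the QCQP stationarity. Consequently the same multipliers $(\lambda,\nu_1,\nu_2)$ certify the KKT conditions of $\mathcal{P}.2$ at $(\bm{p}^\star,\bm{b}^\star)$. The main obstacle I expect is this final projection: rigorously invoking the envelope theorem relies on the unique-minimizer and smoothness facts above, and transferring the multipliers back requires a constraint qualification for $\mathcal{P}.2$ itself. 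I would therefore verify that the reduced feasible set retains strict feasibility (again via the initial feasible point) and that ${\rm MSE_C}$ is continuously differentiable in $(\bm{p},\bm{b})$, which follows from the closed-form LMMSE expression.
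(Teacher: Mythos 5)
Your proposal is correct and takes essentially the same route as the paper: the paper also identifies the convergence point as a KKT point of the frozen-equalizer QCQP (feasibility, dual feasibility, and complementary slackness carry over directly) and transfers stationarity to $\mathcal{P}.2$ by a chain-rule computation in which the term $\frac{\partial\,{\rm MSE_C}}{\partial \bm{W}_{\rm BB}}$ vanishes because $\bm{W}_{\rm BB}^{*}$ is the LMMSE minimizer --- which is precisely your envelope/Danskin step in explicit form. Your added care about Slater's condition and the smoothness/uniqueness hypotheses behind the envelope theorem is a refinement the paper leaves implicit, but it does not change the argument.
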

\begin{proof}
	See Appendix \ref{KKT_proof}.
\end{proof}}
In Fig. \ref{Convergence}, we set the convergence tolerance as 0.001 and the convergence performance of Algorithm 3 with different $\mu$ and $N_t$ is presented.
It can be observed that the convergence speed slows down as the value of $\mu$ decreases and the number of transmit antennas increases.
\begin{figure}[tbp]
    \centering
    \includegraphics[width=8cm]{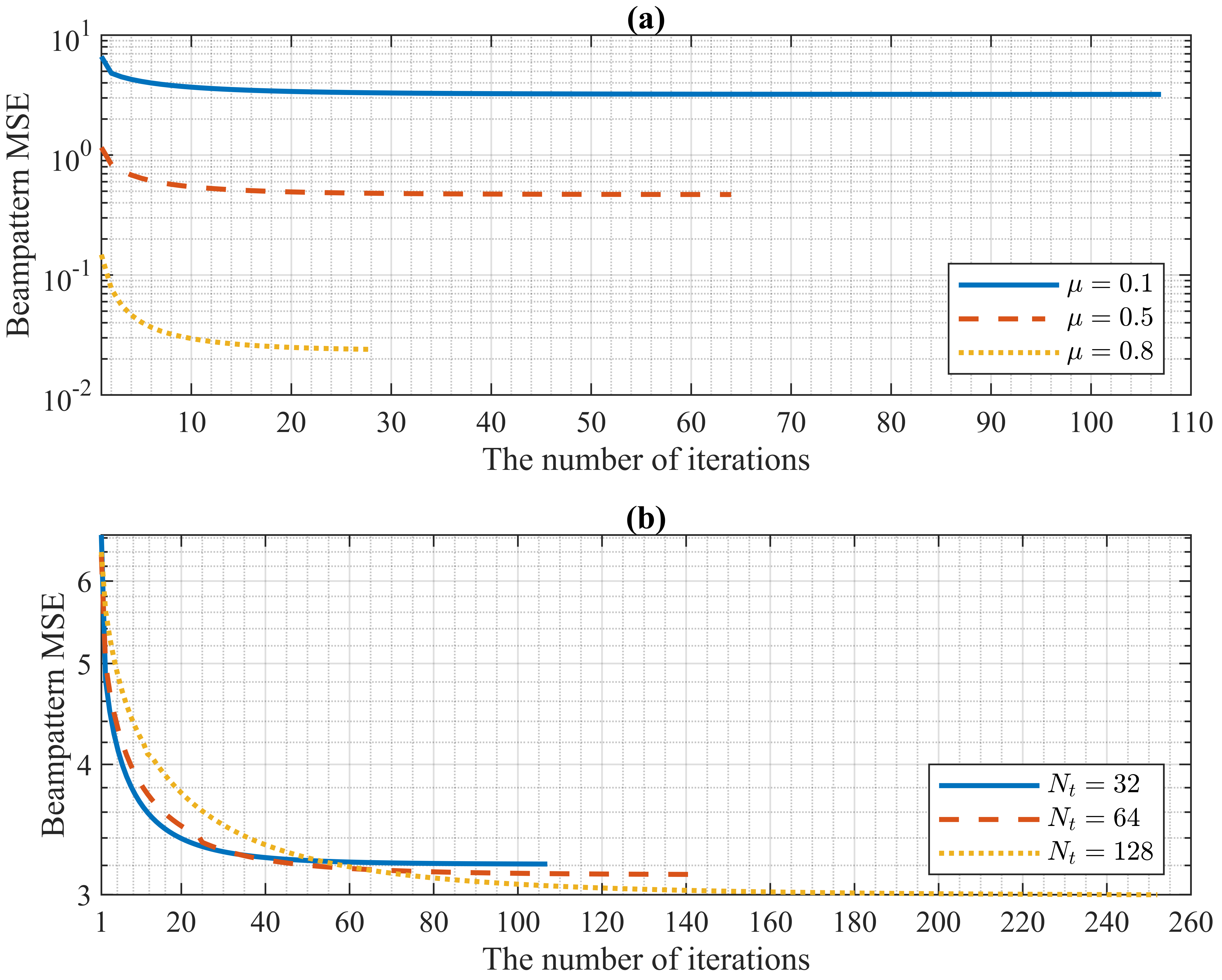}
    \vspace{-3mm}
    \caption{Algorithm \ref{alg digital}'s convergence behaviour with $N_{\rm r}=32$. (a): $N_{\rm t}=32$, $\mu=0.1,0.5,0.8$; (b): $N_{\rm t}=32,64,128$, $\mu=0.1$.}
    \label{Convergence}
    \vspace{-2mm}
\end{figure}
\vspace{-3mm}
\subsection{APEP Analysis}
The APEP is derived to illustrate the theoretical BER performance of the proposed scheme. 
Due to the presence of finite-bit PSs and digital part optimization, obtaining an exact APEP is challenging. 
For simplicity, we analyze sub-beamspace with MBS and the unoptimized digital part, assuming infinite sensing interference power. 

Firstly, we explain how the number of effective paths decreases due to the interference of sensing beams.
As shown in Fig. \ref{beamspace}, there are $P=7$ paths in the original $N_{\rm r} \times N_{\rm t}=7\times7$ beamspace channel and we neglect the off-grid beam leakage.
{BPM refers to the communication-only version of the proposed approach.}
For BPM-ISAC, $W$ sensing beams cover $M_{\rm R}=3$ paths, which further cover $M_{\rm B}=2$ received beams. 
For example, the path $\bm{\bar{H}}(1,2)$ cannot be used for communication because its received beam will be interfered with by $\bm{\bar{H}}(1,5)$. 
Therefore, the communication paths can only be chosen from the rest unaffected $(N_{\rm t}-W)\times(N_{\rm r}-M_{\rm B})$ beam pairs. 
In this case, the number of effective paths is $M_{\rm C}=3$.
In proposition \ref{probability distribution}, the probability distribution of the number of effective paths is derived.
\begin{figure}[tbp]
    \centering
    \includegraphics[width=9cm]{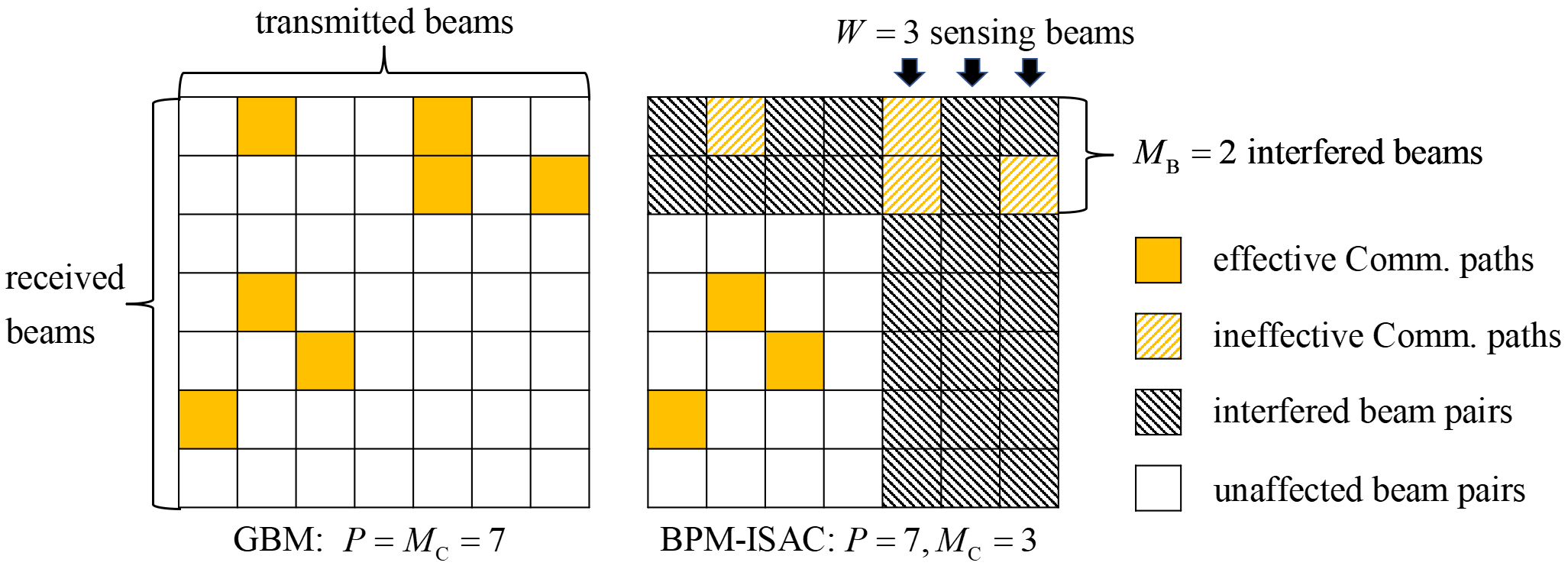}
    \caption{An illustration of beamspace channel for {BPM} and BPM-ISAC.}
    \label{beamspace}
\end{figure}
\begin{proposition}\label{probability distribution}
For the case where the mmWave channel contains $P$ paths and there are $W$ sensing beams, the probability distribution of the number of effective communication paths is
\begin{align}\label{P com}
P(M_{\rm C}=c)=
\left\{\begin{array}{l} P_{M_{\rm R}}(0), c=P
\\ \sum\limits_{r=1}^{P-c} \sum\limits_{b=1}^{r}P_{M_{\rm R}}(r)P_{M_{\rm B}}(r,b)P_{M_{\rm C}}(c,r,b), 
\\\qquad \qquad \qquad \qquad c=0,...,P-1.
\end{array}\right.
\end{align}
where 
\begin{align}\label{P_MR}
 P_{M_{\rm R}}(r)=\frac{C_{N_{\rm r}(N_{\rm t}-W)}^{P-r}C_{N_{\rm r}W}^{r}}{C_{N_{\rm t}N_{\rm r}}^P}, 
\end{align}
\begin{align}\label{P_MB}
P_{M_{\rm B}}(r,b)=
\left\{\begin{array}{l}P_{M_{\rm B}}(r-1,b-1)\frac{(N_{\rm r}-b+1)W}{N_{\rm r}W-r+1}\\
\qquad +P_{M_{\rm B}}(r-1,b)\frac{bW-r+1}{N_{\rm r}W-r+1}, o.w.
\\0, \quad \qquad (r,b)=(1,1) \quad or \quad b=0.
\end{array}\right.
\end{align}
\begin{align}\label{P_MC}
P_{M_{\rm C}}(c, r, b)=\frac{C_{b(N_{\rm t}-W)}^{P-r-c}C_{(N_{\rm r}-r)(N_{\rm t}-W)}^c}{C_{N_{\rm r}(N_{\rm t}-W)}^{P-r}}.
\end{align}
\end{proposition}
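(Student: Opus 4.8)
The plan is to model the $P$ dominant paths as $P$ distinct cells drawn uniformly at random, without replacement, from the $N_{\rm t}N_{\rm r}$ grid points of the beamspace matrix $\bar{\bm{H}}$, consistent with the leakage-free assumption that each path occupies a single row--column pair. Since the $W$ sensing beams fix $W$ of the $N_{\rm t}$ columns, the grid partitions into $N_{\rm r}W$ sensing cells and $N_{\rm r}(N_{\rm t}-W)$ non-sensing cells. I would then recover the product structure of Eq.~(\ref{P com}) by conditioning successively on the nested events $\{M_{\rm R}=r\}$, $\{M_{\rm B}=b\}$ and $\{M_{\rm C}=c\}$ and applying the law of total probability,
\begin{align}
P(M_{\rm C}=c)=\sum_{r}\sum_{b} & P(M_{\rm R}=r)\,P(M_{\rm B}=b\mid M_{\rm R}=r)\nonumber\\
&\times P(M_{\rm C}=c\mid M_{\rm R}=r,M_{\rm B}=b),\nonumber
\end{align}
so that $P_{M_{\rm R}}(r)$, $P_{M_{\rm B}}(r,b)$ and $P_{M_{\rm C}}(c,r,b)$ are precisely the marginal and the two conditional factors.

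The two hypergeometric pieces are the routine steps. For $P_{M_{\rm R}}$, the number of the $P$ paths landing in the $N_{\rm r}W$ sensing cells is hypergeometric, which is exactly Eq.~(\ref{P_MR}). For $P_{M_{\rm C}}$, I would first observe that, conditioned on $M_{\rm R}=r$, the remaining $P-r$ paths are uniformly distributed over the $N_{\rm r}(N_{\rm t}-W)$ non-sensing cells and that, by the row-symmetry of the placement, their joint law depends on the interfered rows only through their number $M_{\rm B}=b$. The non-sensing cells then split into an effective block of $(N_{\rm r}-M_{\rm B})(N_{\rm t}-W)$ cells in the uninterfered rows and an interfered block of $b(N_{\rm t}-W)$ cells; counting how many of the $P-r$ paths fall in the effective block is again hypergeometric and yields Eq.~(\ref{P_MC}). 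The boundary case $c=P$ is treated separately: $M_{\rm C}=P$ forces $M_{\rm R}=0$, since any sensing-column path is lost, hence $P(M_{\rm C}=P)=P_{M_{\rm R}}(0)$, matching the first branch of Eq.~(\ref{P com}).

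The combinatorial core, and the step I expect to be the main obstacle, is the occupancy recursion (\ref{P_MB}) for $P_{M_{\rm B}}(r,b)$, the probability that $r$ cells drawn uniformly without replacement from the $N_{\rm r}\times W$ sensing block occupy exactly $b$ of the $N_{\rm r}$ rows. I would build it up one path at a time. Conditioned on $M_{\rm R}=r$, the $r$ sensing cells form a uniform $r$-subset of the $N_{\rm r}W$ sensing cells, so they may be revealed sequentially while tracking the running number of occupied rows. On inserting the $r$-th cell there are $N_{\rm r}W-(r-1)$ empty cells left; it raises the row count from $b-1$ to $b$ iff it falls in one of the $(N_{\rm r}-b+1)W$ cells of a still-empty row, and it leaves the count at $b$ iff it falls in one of the $bW-(r-1)$ empty cells of an already-occupied row. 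Conditioning on the state after $r-1$ insertions produces the two-term recursion with denominators $N_{\rm r}W-r+1$, seeded by $P_{M_{\rm B}}(0,0)=1$ and $P_{M_{\rm B}}(r,0)=0$ for $r\ge1$.

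Finally I would assemble the factors and pin down the summation ranges in the second branch of Eq.~(\ref{P com}). Because the $c$ effective paths occupy non-sensing columns while the $r$ sensing paths occupy sensing columns, two disjoint cell sets, we have $r+c\le P$, giving $1\le r\le P-c$; the lower bound $r\ge1$ reflects that $c<P$ forces at least one path into the sensing columns. For fixed $r\ge1$ the occupied-row count obeys $1\le b\le r$. Substituting the three factors and summing over these ranges reproduces Eq.~(\ref{P com}). The delicate point throughout is the recursion (\ref{P_MB}): I must verify that sequential revelation preserves the uniform-without-replacement law and that the ``new row'' versus ``old row'' cell counts are exactly $(N_{\rm r}-b+1)W$ and $bW-(r-1)$; the two hypergeometric factors then follow from standard bookkeeping.
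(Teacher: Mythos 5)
Your proposal is correct and takes essentially the same route as the paper's proof in Appendix~\ref{B}: a law-of-total-probability factorization over the nested events $\{M_{\rm R}=r\}$ and $\{M_{\rm B}=b\}$, hypergeometric counting for $P_{M_{\rm R}}$ and $P_{M_{\rm C}}$, and a reveal-one-cell-at-a-time recursion for $P_{M_{\rm B}}$, where you simply make explicit the uniform-without-replacement argument and the transition counts $(N_{\rm r}-b+1)W$ and $bW-r+1$ that the paper leaves implicit. One remark: your bookkeeping yields an effective block of $(N_{\rm r}-b)(N_{\rm t}-W)$ cells and the seed $P_{M_{\rm B}}(0,0)=1$ (equivalently $P_{M_{\rm B}}(1,1)=1$), which agrees with the paper's own verbal count ``$(N_{\rm t}-W)\times(N_{\rm r}-M_{\rm B})$'' and is the correct reading of Eqs.~(\ref{P_MC}) and (\ref{P_MB}), whose printed ``$N_{\rm r}-r$'' and ``$0$ for $(r,b)=(1,1)$'' appear to be typos rather than discrepancies with your argument.
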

\begin{proof}
    See Appendix \ref{B}.
\end{proof}
\begin{figure}[tbp]
    \centering
\includegraphics[width=8cm]{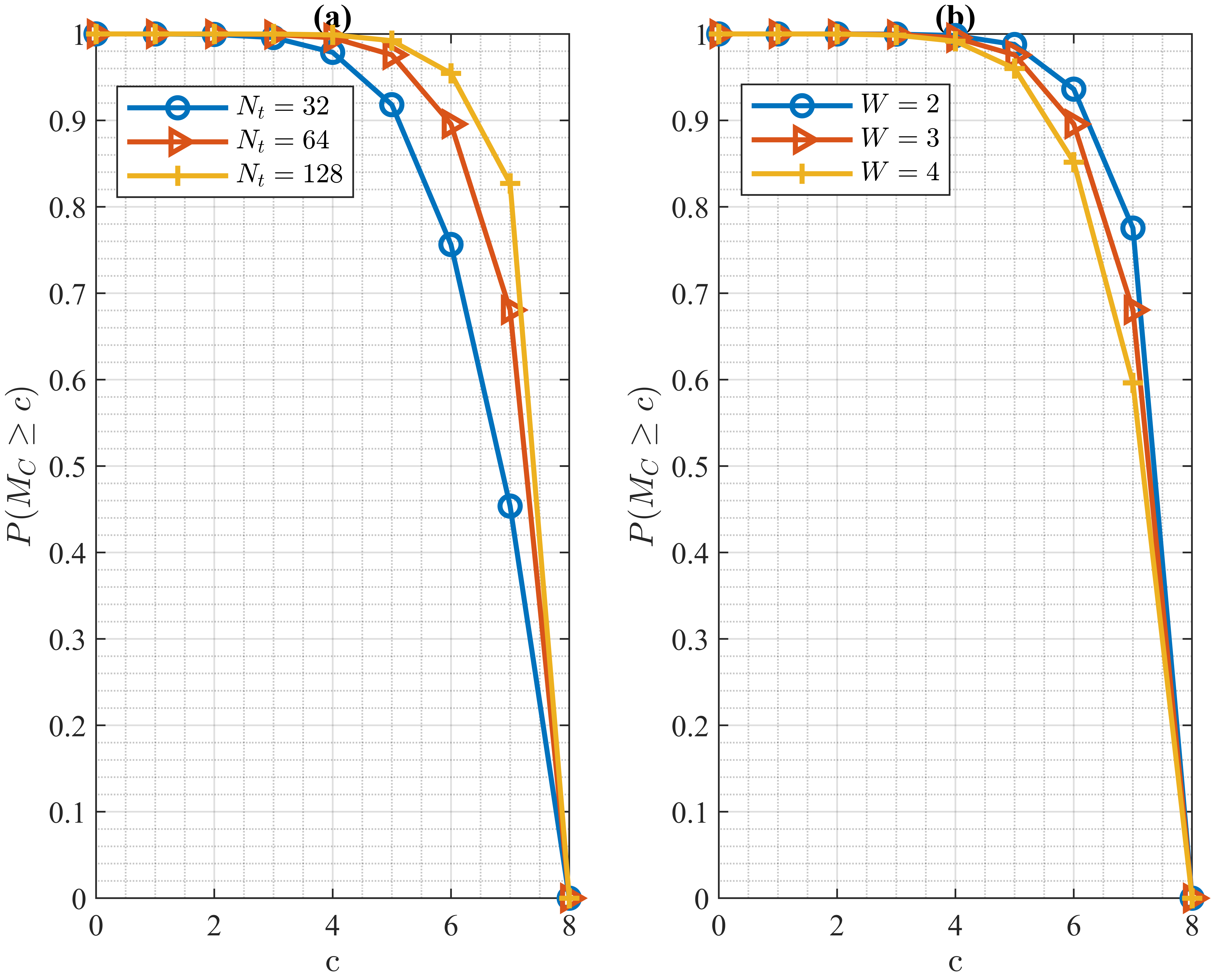}
\vspace{-2mm}
    \caption{The probability distribution of the number of effective paths with $N_{\rm r}=32$. (a): $W=3$; (b): $N_t=32$.}
    \label{distribution}
    \vspace{-5mm}
\end{figure}
In Fig. \ref{distribution}, the probability of the number of effective paths $M_C\geq c$ is given. 
It can be observed that a larger number of transmit antennas and fewer sensing beams will render more effective communication beams.
\begin{proposition}\label{pairwise proposition}
For the case where the mmWave channel contains $P$ paths and there are $W$ sensing beams, the pairwise error probability $P(\bm{\bar{x}}_{\rm C} \rightarrow \bm{\hat{x}}_C)$ through maximum-likelihood (ML) detection algorithm is derived as Eq. (\ref{pairwise}).
\begin{figure*}[hb]
\vspace{-3mm}
\hrulefill
\setcounter{TempEqCnt}{\value{equation}} 
\setcounter{equation}{43} 
\begin{align}\label{pairwise}
 P(\bm{\bar{x}}_{\rm C}\rightarrow \bm{\hat{x}}_C)
\simeq \sum_{c=K}^{P}\frac{c!P\left(M_{\rm C}=c\right)}{\left(c-K\right)!}&
\left(\frac{\mathbb{B}\left(\sum_{i=1}^{K}\left(\frac{N_{\rm t}N_{\rm r}}{4P\sigma^2}{\bigtriangleup x_i^2}+1\right),c-K+1\right)}{12\prod \limits_{j=2}^{K}\sum_{i=j}^{K}\left(\frac{N_{\rm t}N_{\rm r}}{4P\sigma^2}{\bigtriangleup x_i^2}+1\right)}\right.+
\nonumber\\&
\left.\frac{\mathbb{B}\left(\sum_{i=1}^{K}\left(\frac{N_{\rm t}N_{\rm r}}{3P\sigma^2}{\bigtriangleup x_i^2}+1\right),c-K+1\right)}{4\prod \limits_{j=2}^{K}\sum_{i=j}^{K}\left(\frac{N_{\rm t}N_{\rm r}}{3P\sigma^2}{\bigtriangleup x_i^2}+1\right)}\right)+\frac{1}{2^\eta}P\left(M_{\rm C}<K\right).    
\end{align}
\end{figure*}
\setcounter{equation}{45}
\end{proposition}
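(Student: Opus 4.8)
The plan is to condition on the number of effective communication paths $M_{\rm C}$ characterized in Proposition \ref{probability distribution} and invoke the law of total probability,
\begin{align}
P(\bm{\bar{x}}_{\rm C}\rightarrow \bm{\hat{x}}_C)=\sum_{c=0}^{P}P\bigl(\bm{\bar{x}}_{\rm C}\rightarrow \bm{\hat{x}}_C\mid M_{\rm C}=c\bigr)\,P(M_{\rm C}=c).\nonumber
\end{align}
I would split the sum at $c=K$. When $c<K$ there are fewer usable beams than activated streams, so the index pattern cannot be resolved and detection degenerates to a uniform guess over the $2^{\eta}$ equiprobable messages; the resulting pairwise confusion probability is modelled as $2^{-\eta}$, which collects into the additive term $\tfrac{1}{2^{\eta}}P(M_{\rm C}<K)$. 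The remaining task is thus to evaluate the conditional PEP for each $c\ge K$.

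For $c\ge K$, under the infinite-sensing-interference assumption the beams overlapping the $W$ sensing directions are effectively nulled, so the surviving $K$ selected beams form a parallel AWGN channel and the ML pairwise error probability takes the standard Gaussian form $Q\bigl(\sqrt{\Vert\bm{H}_{\rm C}\bm{P}_{\rm C}(\bm{\bar{x}}_{\rm C}-\bm{\hat{x}}_C)\Vert_2^2/(2\sigma^2)}\bigr)$. Neglecting off-grid leakage, each on-grid path maps to a single beamspace entry of squared magnitude $\tfrac{N_{\rm t}N_{\rm r}}{P}\lvert\alpha\rvert^2$; since $\alpha\sim\mathcal{CN}(0,1)$ these gains are i.i.d.\ unit-mean exponential. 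Writing the selected (largest) gains as ordered exponentials $g_{(1)},\dots,g_{(K)}$ and pairing them with the symbol differences, the $Q$-function argument becomes $\sum_{i=1}^{K}\tfrac{N_{\rm t}N_{\rm r}}{2P\sigma^2}\bigtriangleup x_i^2\,g_{(i)}$. I would then apply the well-known exponential approximation $Q(x)\approx\tfrac{1}{12}e^{-x^2/2}+\tfrac14 e^{-2x^2/3}$, which converts each conditional PEP into two exponentials of the form $\exp(-\sum_i \lambda_i g_{(i)})$ with $\lambda_i=\tfrac{N_{\rm t}N_{\rm r}}{4P\sigma^2}\bigtriangleup x_i^2$ and $\lambda_i=\tfrac{N_{\rm t}N_{\rm r}}{3P\sigma^2}\bigtriangleup x_i^2$ for the two terms, respectively.

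The core step is to average these exponentials over the joint distribution of the $K$ largest order statistics of $c$ i.i.d.\ unit-rate exponentials, whose density is $\tfrac{c!}{(c-K)!}(1-e^{-z_1})^{c-K}\prod_{i=1}^{K}e^{-z_i}$ on $0<z_1<\dots<z_K$. I would integrate in descending order $z_K,z_{K-1},\dots,z_2$; each integration over $(z_{j-1},\infty)$ contributes the reciprocal tail sum $1/\sum_{i=j}^{K}(\lambda_i+1)$, producing the denominator product $\prod_{j=2}^{K}\sum_{i=j}^{K}(\lambda_i+1)$, where the $+1$ originates from the intrinsic density $e^{-z_i}$. The last integral over the smallest selected gain, $\int_0^\infty (1-e^{-z_1})^{c-K}e^{-[\sum_{i=1}^{K}(\lambda_i+1)]z_1}\,dz_1$, is evaluated via the substitution $u=e^{-z_1}$ to give $\int_0^1 u^{\sum_{i=1}^{K}(\lambda_i+1)-1}(1-u)^{c-K}\,du=\mathbb{B}\bigl(\sum_{i=1}^{K}(\lambda_i+1),\,c-K+1\bigr)$, the Beta function appearing in (\ref{pairwise}). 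Substituting $\lambda_i+1=\tfrac{N_{\rm t}N_{\rm r}}{4P\sigma^2}\bigtriangleup x_i^2+1$ (resp.\ with $3$ in place of $4$) and reinstating the prefactors $\tfrac{1}{12}$, $\tfrac14$ and $\tfrac{c!}{(c-K)!}$ reproduces the two bracketed terms, and summing over $c$ together with the $c<K$ contribution completes the derivation.

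The main obstacle is the order-statistics averaging: correctly forming the joint density of the $K$ strongest paths (notably the $(1-e^{-z_1})^{c-K}$ factor encoding the $c-K$ discarded paths), performing the nested integration so that the tail-sum product emerges cleanly, and recognizing the residual integral as a Beta function. Secondary care is needed to justify the Gaussian PEP form and the exact correspondence between the ordered gains $g_{(i)}$ and the symbol-difference coefficients $\bigtriangleup x_i^2$; both rely on the infinite-interference nulling and the leakage-free on-grid beamspace assumptions stated just before the proposition.
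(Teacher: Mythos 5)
Your proposal is correct and follows essentially the same route as the paper's proof in Appendix~\ref{C}: the same split at $M_{\rm C}<K$ yielding the $2^{-\eta}P(M_{\rm C}<K)$ term, the same exponential approximation $Q(x)\simeq\frac{1}{12}e^{-x^2/2}+\frac{1}{4}e^{-2x^2/3}$, and the same averaging over the joint density $\frac{c!}{(c-K)!}(1-e^{-\gamma_1})^{c-K}\prod_{i=1}^{K}e^{-\gamma_i}$ of the $K$ largest unit-exponential gains, with the nested tail integrations producing the product $\prod_{j=2}^{K}\sum_{i=j}^{K}(\lambda_i+1)$ and the substitution $u=e^{-\gamma_1}$ giving the Beta function. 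Your identification of $\lambda_i=\frac{N_{\rm t}N_{\rm r}}{4P\sigma^2}\bigtriangleup x_i^2$ (respectively with $3$ in place of $4$) and the resulting assembly into Eq.~(\ref{pairwise}) reproduce the paper's derivation exactly.
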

\begin{proof}
    See Appendix \ref{C}.
\end{proof}
\setcounter{equation}{44}
Then the expression of APEP is derived as
 \begin{align}\label{APEP}
 P_{\rm APEP}=&\frac{1}{\eta 2^\eta}\sum_{\bm{\bar{x}}_{\rm C}}\sum_{\bm{\hat{x}}_C}P(\bm{\bar{x}}_{\rm C} \rightarrow \bm{\hat{x}}_C)e(\bm{\bar{x}}_{\rm C}, \bm{\hat{x}}_C),
 \end{align}
where $P(\bm{\bar{x}}_{\rm C} \rightarrow \bm{\hat{x}}_C)$ is given by proposition \ref{pairwise proposition} and $e(\bm{\bar{x}}_{\rm C}, \bm{\hat{x}}_C)$ denotes the number of error bits between $\bm{\bar{x}}_{\rm C}$ and $\bm{\hat{x}}_C$.

{As shown in Eq. (\ref{pairwise}), compared to GBM \cite{gao2019spatial}, the change in APEP originates from the damage caused by additional sensing beams to the effective communication paths. In fact, GBM is a special case of BPM-ISAC when $W=0$, i.e.,
\begin{align}
P(M_{\rm C}=c)=
\left\{\begin{array}{l} 1, c=P
	\\ 0,  c=0,...,P-1.
\end{array}\right.
\end{align}
It is noted that the reduction in effective communication paths lowers the lower bound of APEP, denoted as $P_{\rm APEP}^*$. As the noise approaches zero, APEP of BPM-ISAC approaches its lower bound, given by
\begin{align}\label{lower_bound}
P_{\rm APEP}^*=&\frac{1}{\eta 2^{2\eta}}P\left(M_{\rm C}<K\right)\sum_{\bm{\bar{x}}_{\rm C}}\sum_{\bm{\hat{x}}_C}e(\bm{\bar{x}}_{\rm C}, \bm{\hat{x}}_C).
\end{align}
Therefore, the number of sensing beams determines the boundary of the theoretical BER performance.}

\subsection{CRB Analysis}
To further illustrate the sensing performance of the proposed scheme, the CRB {\cite{van2002optimum}} of DoA estimation is derived. 
Employing the Swerling-II model
\cite{skolnik1980introduction}, the reflection coefficient $\beta_i$ is assumed to be constant during each scanning.
According to Eq. (\ref{sensing baseband}), for $L$ sample times per scanning, the baseband signal $\bm{Y}_{\rm B}=[\bm{y}_{\rm B}^1,...,\bm{y}_{\rm B}^{L}]$ can be derived as
\begin{align}
    \bm{Y}_{\rm B}=\bm{T}_{\rm B}^{\rm H}\Xi\bm{T}_{\rm B}\bm{P}_{\rm S}\bar{\bm{X}}_{\rm S}+\bm{N}_{\rm B},
\end{align}
where $\bar{\bm{X}}_{\rm S}=[\bar{\bm{x}}_{\rm S}^{1},...,\bar{\bm{x}}_{\rm S}^{L}]$.
$\bar{\bm{x}}_{\rm S}^{l}$ and $\bm{y}_{\rm B}^{l}$ are the transmitted sensing signal and baseband received echo signal of the $l\mbox{-}$th sample. $\bm{Y}_{\rm B}$ obeys complex Gaussian distribution $\mathcal{CN}(\bm{M}_{\rm Y},\bm{R}_{\rm Y})$, where $\bm{M}_{\rm Y}=\bm{T}_{\rm B}^{\rm H}\Xi\bm{T}_{\rm B}\bm{P}_{\rm S}\bar{\bm{X}}_{\rm S}$ and $\bm{R}_{\rm Y}=\bm{R}_{\rm B}$.
For the target located in the direction of $\psi_i$, given the directions of other targets, the CRB of its DoA estimation can be obtained as follows (See \cite{van2002optimum}, Section 8.2.3):
\begin{align}\label{CRB}
&\overline{\rm CRB}\left(\psi_i\right)\nonumber\\
=&\left\{-{\rm Tr}\left(\frac{\partial\bm{R}_{\rm Y}^{-1}}{\partial \psi_i}\frac{\partial\bm{R}_{\rm Y}}{\partial \psi_i}\right)+2\Re\left\{{\rm Tr}\left(\frac{\partial\bm{M}_{\rm Y}^{\rm H}}{\partial\psi_i}\bm{R}_{\rm Y}^{-1}\frac{\partial\bm{M}_{\rm Y}}{\partial\psi_i}\right)\right\}\right\}^{-1} \nonumber\\
=&\left\{\hspace{-0.02in}2\Re\hspace{-0.02in} \left\{\hspace{-0.02in}{\rm Tr}\left(\hspace{-0.02in}\frac{\partial\hspace{-0.02in}\left(\bm{T}_{\rm B}^{\rm H}\Xi\bm{T}_{\rm B}\bm{P}_{\rm S}\bar{\bm{X}}_{\rm S}\right)^{\rm H}}{\partial \psi_i}\bm{R}_{\rm B}^{-1}\frac{\partial\bm{T}_{\rm B}^{\rm H}\Xi\bm{T}_{\rm B}\bm{P}_{\rm S}\bar{\bm{X}}_{\rm S}}{\partial \psi_i}\right)\hspace{-0.02in}\right\}\hspace{-0.02in}\right\}\hspace{-0.02in}^{-1}.
\end{align}
Taking the expectation of $\overline{\rm CRB}\left(\psi_i\right)$ with respect to $\bm{\bar{X}}_{\rm S}$ and considering $\bm{R}_{\bm{\bar{x}}_{\rm S}}=\bm{D}$, the final expression is written as
\begin{align}
&{\rm CRB}\left(\psi_i\right)\nonumber\\
=&\frac{1}{2\lvert\beta_i\rvert^2}\left(Tr\left(\bm{P}_{\rm S}^{\rm H}\bm{F}_{\rm S}^{\rm H}\dot{\bm{A}_i^{\rm H}}\bm{F}_{\rm S}\bm{R}_{\rm B}^{-1}\bm{F}_{\rm S}^{\rm H}\dot{\bm{A}_i}\bm{F}_{\rm S}\bm{P}_{\rm S}\bm{D}\right)\right)^{-1},
\end{align}
where $\dot{\bm{A}_i}=\dot{\bm{a}}(\psi_i)\bm{a}^{\rm H}(\psi_i)+\bm{a}(\psi_i)\dot{\bm{a}}^{\rm H}(\psi_i)$.
{Furthermore, denoting matrix $\bm{F}_{\rm S}^{\rm H}\dot{\bm{A}_i^{\rm H}}\bm{F}_{\rm S}\bm{R}_{\rm B}^{-1}\bm{F}_{\rm S}^{\rm H}\dot{\bm{A}_i}\bm{F}_{\rm S}$ as $\bm{M}$, the CRB can be expressed in the following form:
\begin{align}
{\rm CRB}\left(\psi_i\right)=\frac{1}{2\lvert\beta_i\rvert^2\sum_{i=1}^W b_i^2 d_i \bm{M}_{ii}}.
\end{align}
Clearly, $\bm{M}$ is a positive definite matrix, and $M_{ii}\ge 0$. With the fixed activation probability $d_i$ and the analog precoder, a higher transmission power results in a lower CRB. Therefore, sensing beampattern MSE minimization under a given transmission power constraint helps improve the performance of DoA estimation.}
\subsection{Extension to Multiple RF Chains for Sensing}
In the previous modeling, only a single RF chain is dedicatedly spared for sensing.
Indeed, the number of RF chains can be extended to $W_S$, where $1\le W_S\le W$.
In this case, there are $W_S$ out of $W$ beams simultaneously activated, resulting in a total of ${\rm C}_W^{W_S}$ patterns.
The activation probability matrix $\bm{D}$ is no longer a diagonal matrix and is determined by the predefined activation probability of each pattern.
For instance, when $W_S=W$, $\bm{D}$ becomes a matrix filled with 1. 
By substituting the correct matrix $\bm{D}$, the proposed transceiver design can be easily applied to the scenario with multiple RF chains for sensing.
It is worth noting that increasing the number of RF chains for sensing can accelerate scanning speed, improving sensing accuracy, especially in high dynamic scenarios. 
Nevertheless, such an improvement comes at the cost of increased hardware overhead.
Therefore, the selection of the number of sensing RF chains should carefully balance the sensing efficiency and hardware cost.
\vspace{-2mm}
\section{Simulations}
\begin{figure}[t]
    \centering
    \includegraphics[width=8cm]{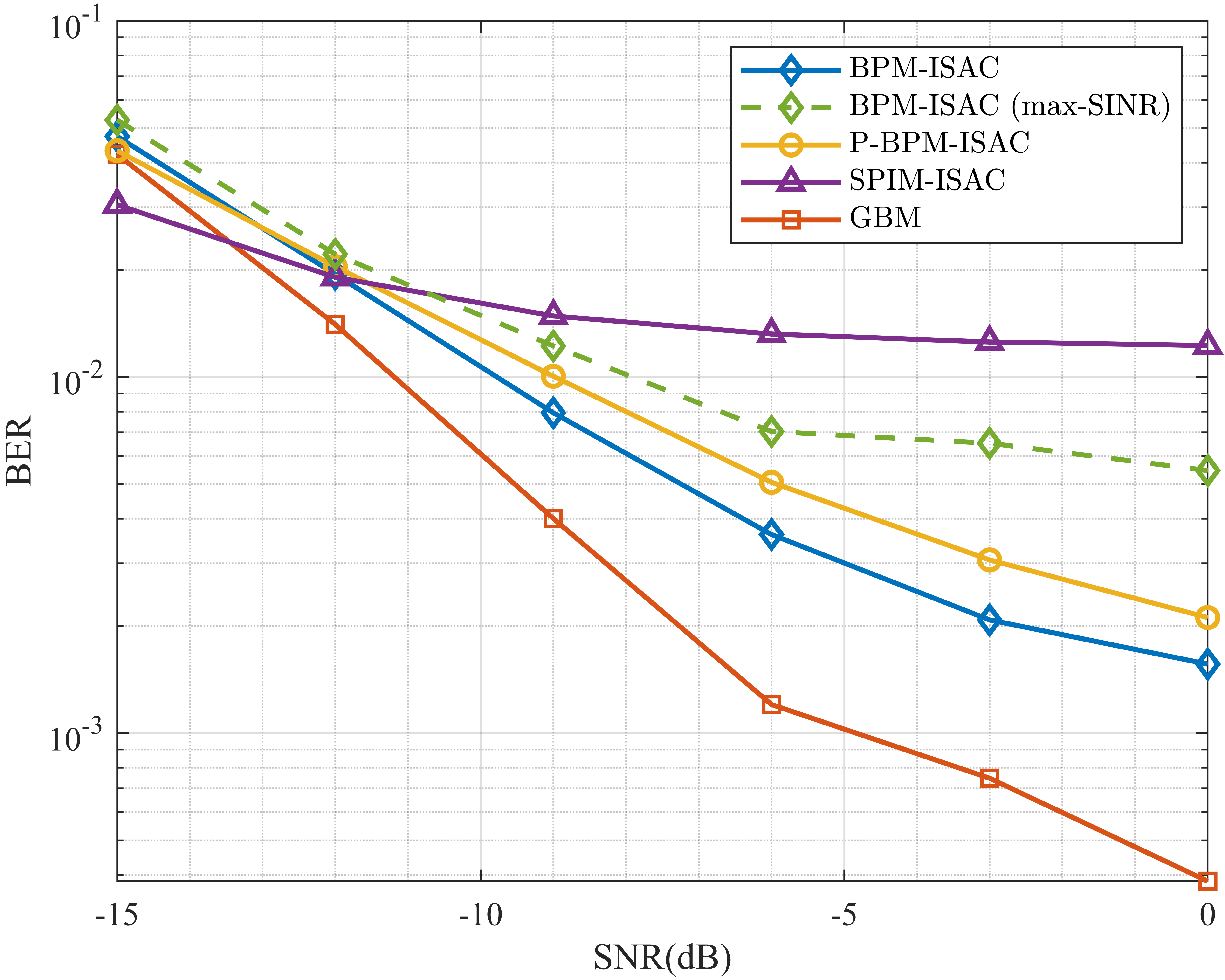}
    \vspace{-3mm}
    \caption{{BER comparison among BPM-ISAC-MBS ($\mu=0.5$), its variants, and other transceiver designs ($\eta=8$ bps/Hz).}}
    \vspace{-3mm}
    \label{S1}
    \end{figure}
\begin{figure}[t]
    \centering
    \includegraphics[width=8cm]{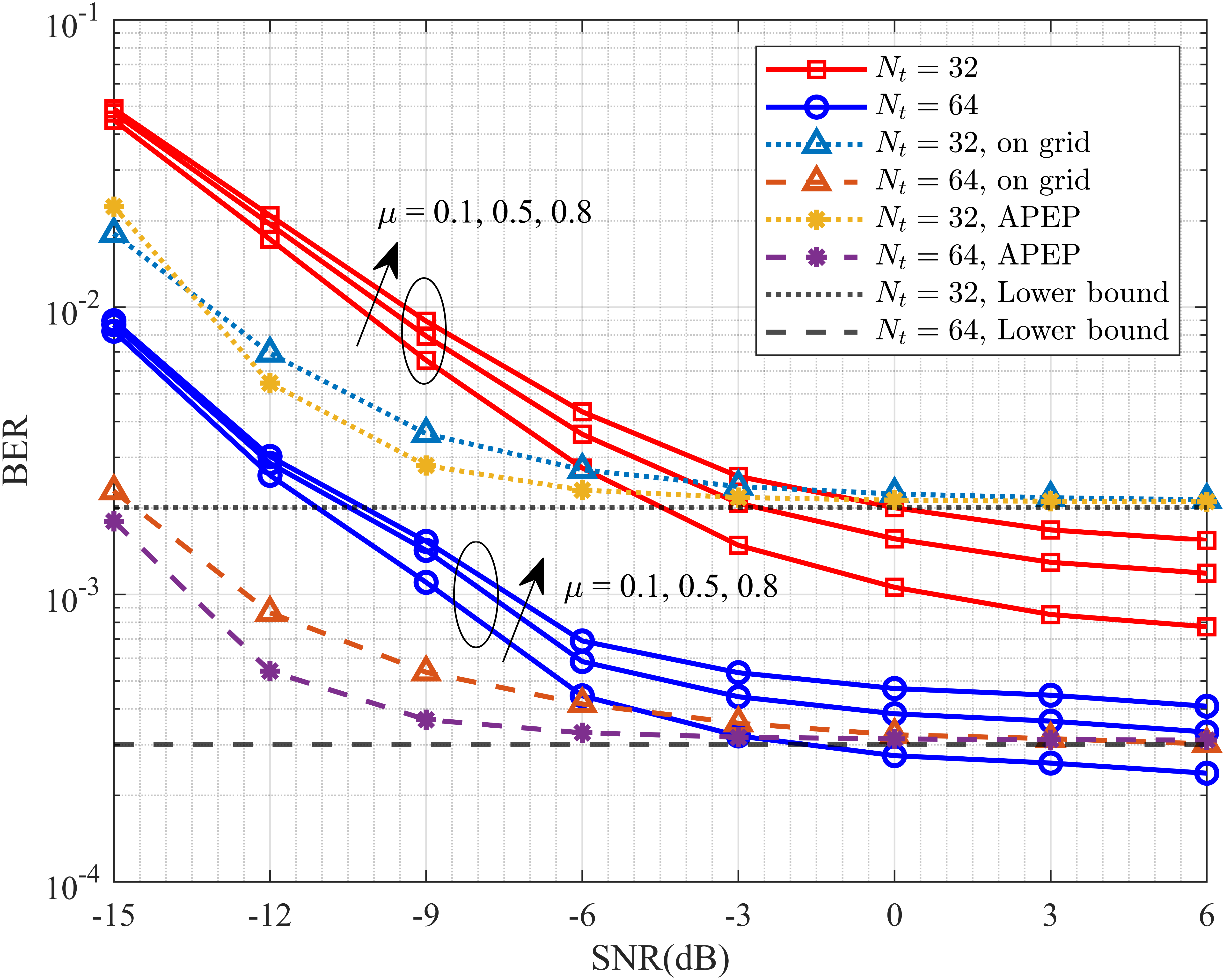}
    \vspace{-3mm}
    \caption{{BER performance of BPM-ISAC-MBS ($N_{\rm r}=32$).}}
    \label{S2}
    \end{figure}
\begin{figure}[t]
    \centering
    \includegraphics[width=8cm]{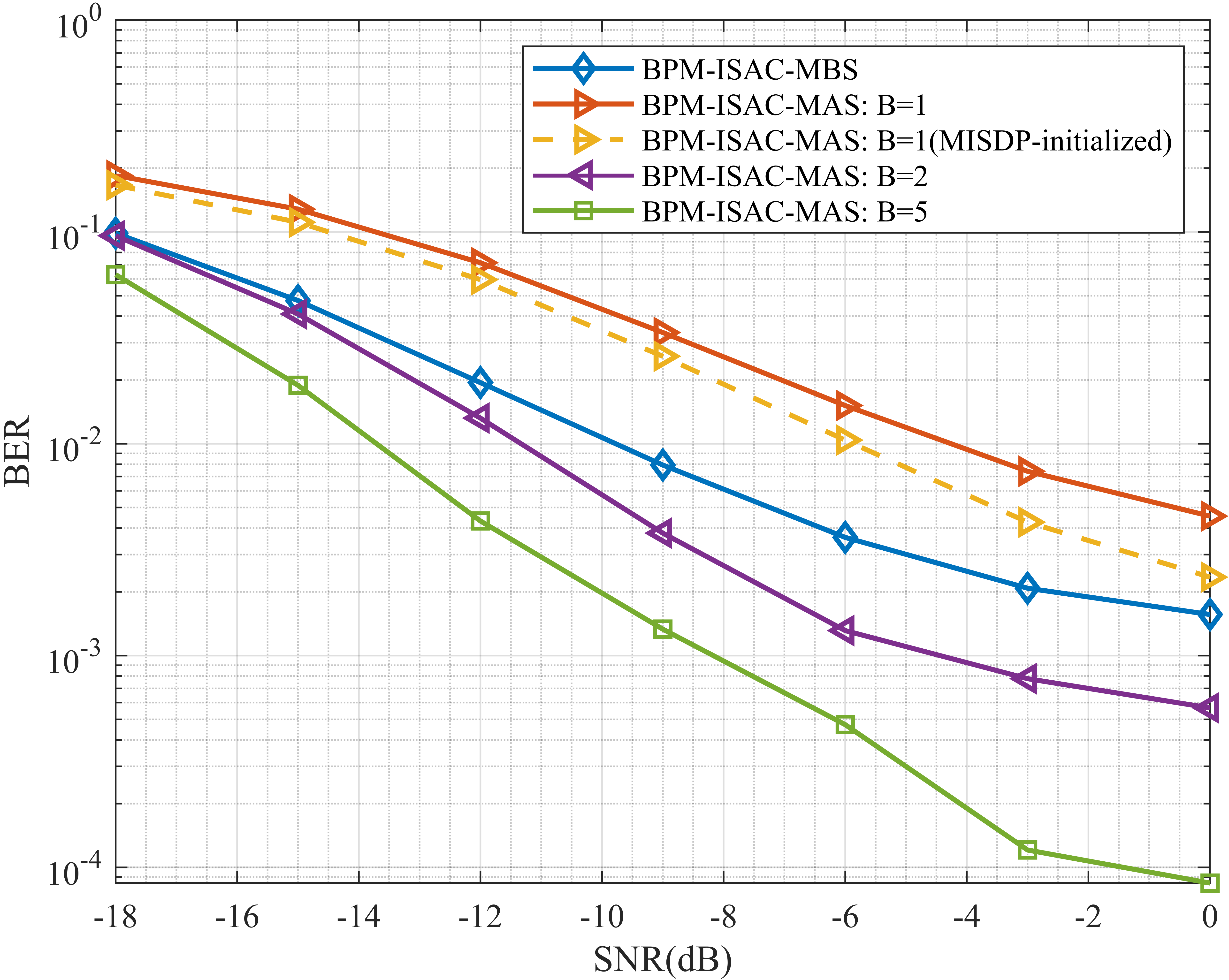}
    \vspace{-3mm}
    \caption{{BER performance of BPM-ISAC-MBS and BPM-ISAC-MAS ($\mu$=0.5).}}
    \vspace{-3mm}
    \label{S3}
    \end{figure}
\begin{figure}[t]
    \centering
    \includegraphics[width=7.9cm]{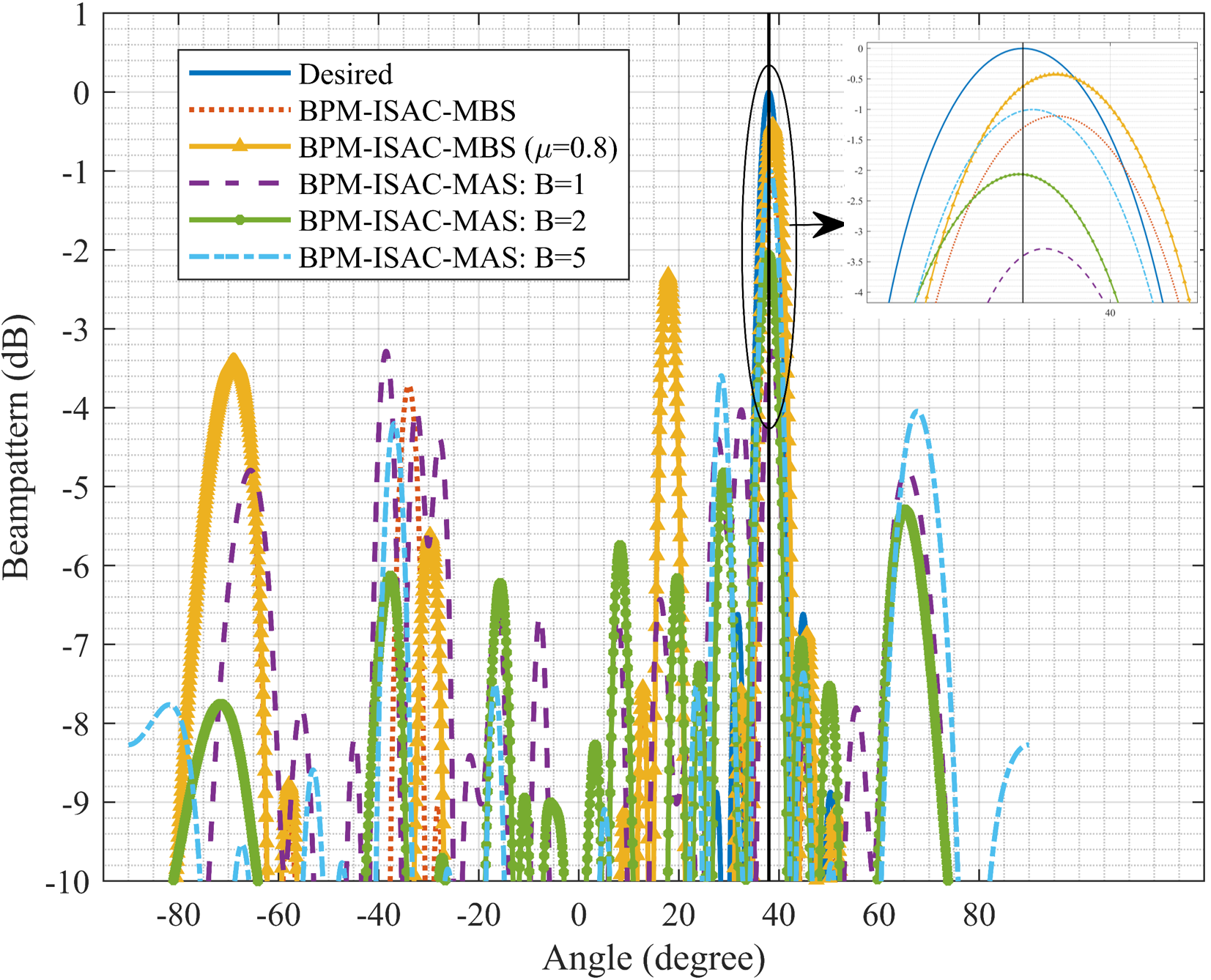}
    \vspace{-3mm}
    \caption{{Instantaneous normalized beampattern with BPM-ISAC-MBS and BPM-ISAC-MAS ($\mu$=0.5 or 0.8, sensing beam points at $38^{\circ}$).}}
    \label{S4}
    \vspace{-3mm}
    \end{figure}
In this section, we evaluate the communication and sensing performance of the proposed BPM-ISAC method through numerical simulation. 
We consider a hybrid mmWave ISAC system, where $N_{\rm t}=N_{\rm r}=N_{\rm e}=32$ {unless otherwise specified}.
Suppose there are $P=8$ non-line-of-sight (NLoS) paths with $\alpha_i \sim \mathcal{CN}(0,1)$, and $\theta_i$ and $\phi_i$ are uniformly distributed in $[-\pi/2,\pi/2)$. 
For communication, we adopt 4-QAM modulation and set $K=4$, $N_{\rm C}=3$, and $L=20$. 
For sensing, we set $W=3$ and  $T_{\rm R}=5$.
Without loss of generality, we assume two targets are located at $\psi_1=39^{\circ}$ and $\psi_2=43^{\circ}$ with reflection coefficients of $\lvert\beta_1\rvert=\lvert\beta_2\rvert=1$.
The scanning directions of interest is set as $[38^{\circ},44^{\circ},50^{\circ}]$.
The ideal beampattern is $\bm{t}=\sqrt{T_{\rm R}}\bm{1}_W$ and the activation probability matrix is $\bm{D}=\frac{1}{W}\bm{I}_{W}$.
For algorithms 2 and 3, convergence tolerance is set as $0.001$, and the maximum number of iterations is set as 50. 
The signal-to-noise ratio (SNR) is defined as $\frac{E_b}{N_0}=\frac{N_{\rm C}}{\eta\sigma^2}$.

To simplify the representation, `BPM-ISAC-MBS' and `BPM-ISAC-MAS' denote our proposed method with MBS and MAS, respectively. 
For comparison, some relevant methods and variants are introduced.
`SPIM-ISAC' refers to \cite{elbir2023millimeter} which utilizes $K$ strongest spatial paths for communication and 
`GBM' refers to \cite{gao2019spatial}.
`P-BPM-ISAC' denotes the plain version of `BPM-ISAC-MBS', which utilizes $K$ beams simultaneously without index modulation.
`BPM-ISAC-MBS' with maximum SINR-based beam selection criterion is also presented, i.e., the beam pairs with the largest signal-to-interference-plus-noise-ratio (SINR) are selected, where the SINR of beam pairs $(i,j)$ is defined as 
\begin{align}
\textrm{SINR}[i,j]=\frac{\lvert\bm{\bar{H}}[i,j]\rvert^2}{\sum_{k\in \Omega}\lvert\bm{\bar{H}}[i,k]\rvert^2+\sigma^2}.
\end{align} 
For `EDC-ISAC', fully digital architecture is adopted and eigenvectors corresponding to $K$ largest eigenvalue of the spatial channel are utilized to construct EDC.
\vspace{-3mm}
\subsection{Communication Performance}
In Fig. \ref{S1}, we compare the BER performance of BPM-ISAC-MBS with $\mu=0.5$, its variants, and other schemes.
For a fair comparison, all schemes adopt the 4-QAM modulation to keep the same SE as 8 bps/Hz.
SPIM-ISAC \cite{elbir2023millimeter} exhibits high BER at high SNR due to severe sensing interference.
BPM-ISAC-MBS with max-SINR beam selection criterion performs worse, indicating the advantage of the proposed min-MSE criterion. 
In high SNR regions, BPM-ISAC demonstrates lower BER than P-BPM-ISAC, highlighting the superiority of beam pattern modulation. 
In addition, the performance of GBM is provided as a reference, which is the special case of BPM-ISAC-MBS without sensing interference.
    
In Fig. \ref{S2}, the BER performance of BPM-ISAC-MBS with different $N_{\rm t}$ and $\mu$ is presented.
As $\mu$ increases, strengthening the communication constraint, the BER performance gradually decreases.
The BER performance of $N_{\rm t}=64$ is better than the case of $N_{\rm t}=32$ due to the array gain.  
The BER performance with the on-grid beamspace channel and unoptimized digital precoder is presented, which is consistent with APEP analysis at high SNR regions.
{The APEP lower bound is also presented according to Eq. (\ref{lower_bound}).}
It can be observed that the on-grid case has better BER performance than the normal case at the low SNR region.
This is due to that Gaussian noise is the main interference factor at low SNR and the communication beams of the on-grid case have more concentrated energy without beam leakage.
At high SNR, sensing interference becomes the main interference and these two cases perform similarly.

In Fig. \ref{S3}, the BER performance of BPM-ISAC-MBS and BPM-ISAC-MAS with different bit resolutions are illustrated.
For BPM-ISAC-MAS, the BER decreases with the increase in bit number due to the higher freedom degree of the optimized beam pattern.
In addition, the BER performance of MISDP-initialized BPM-ISAC-MAS with 1-bit PSs is presented.
To reduce computation time, $\bm{F}_{\rm C}$ and $\bm{W}_{\rm RF}$ have been optimized only once alternatively.
It is observed that with proper initialization, the BER of 1-bit BPM-ISAC-MAS approaches BPM-ISAC-MBS at high SNR.
\vspace{-3mm}
\subsection{Sensing Performance}
\begin{figure}[t]
    \centering
    \includegraphics[width=8cm]{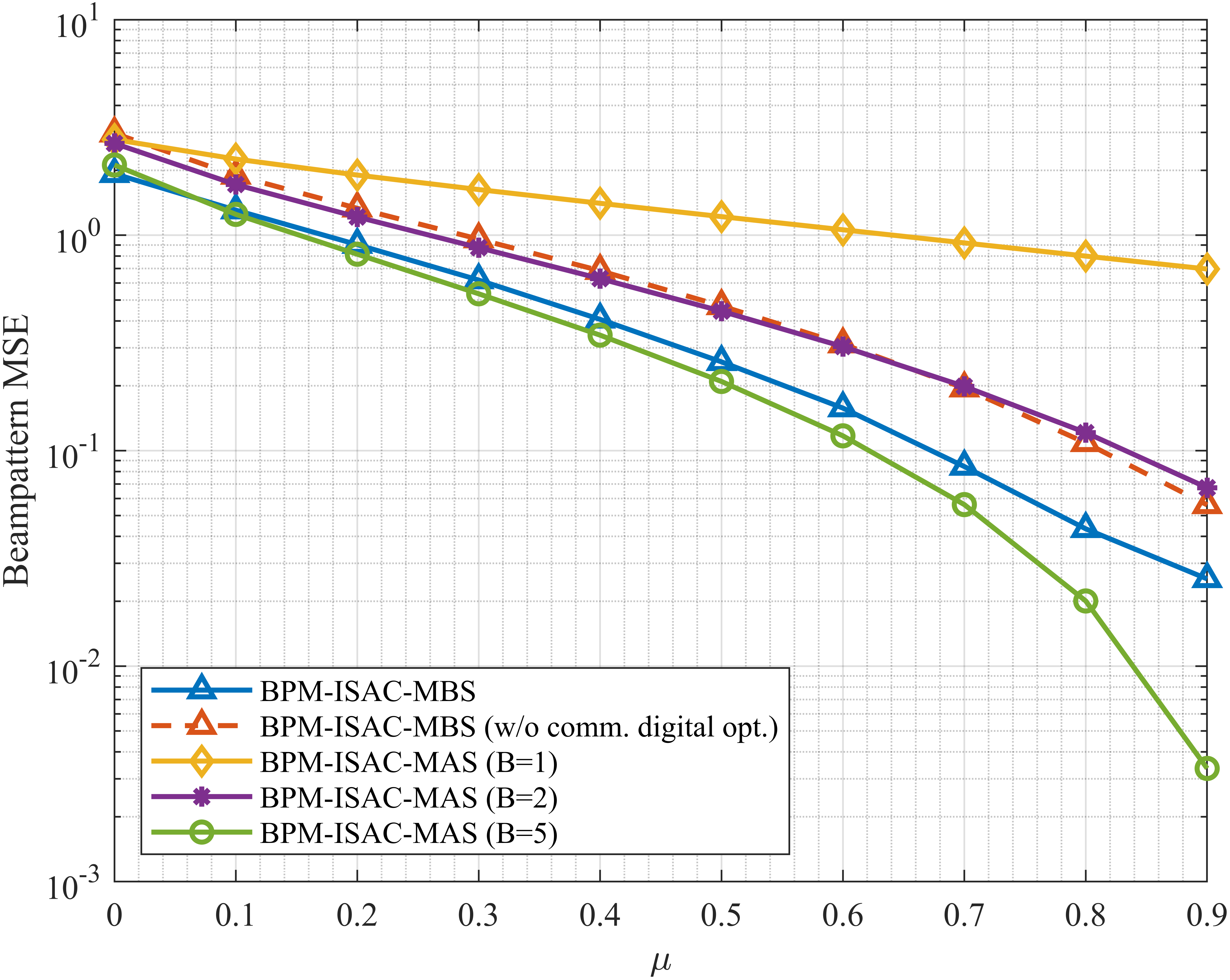}
    \vspace{-3mm}
    \caption{{Sening beampattern MSE of BPM-ISAC-MBS and BPM-ISAC-MAS.}}
    \label{S5}
    \end{figure}
\begin{figure}[ht]
    \centering
    \includegraphics[width=8cm]{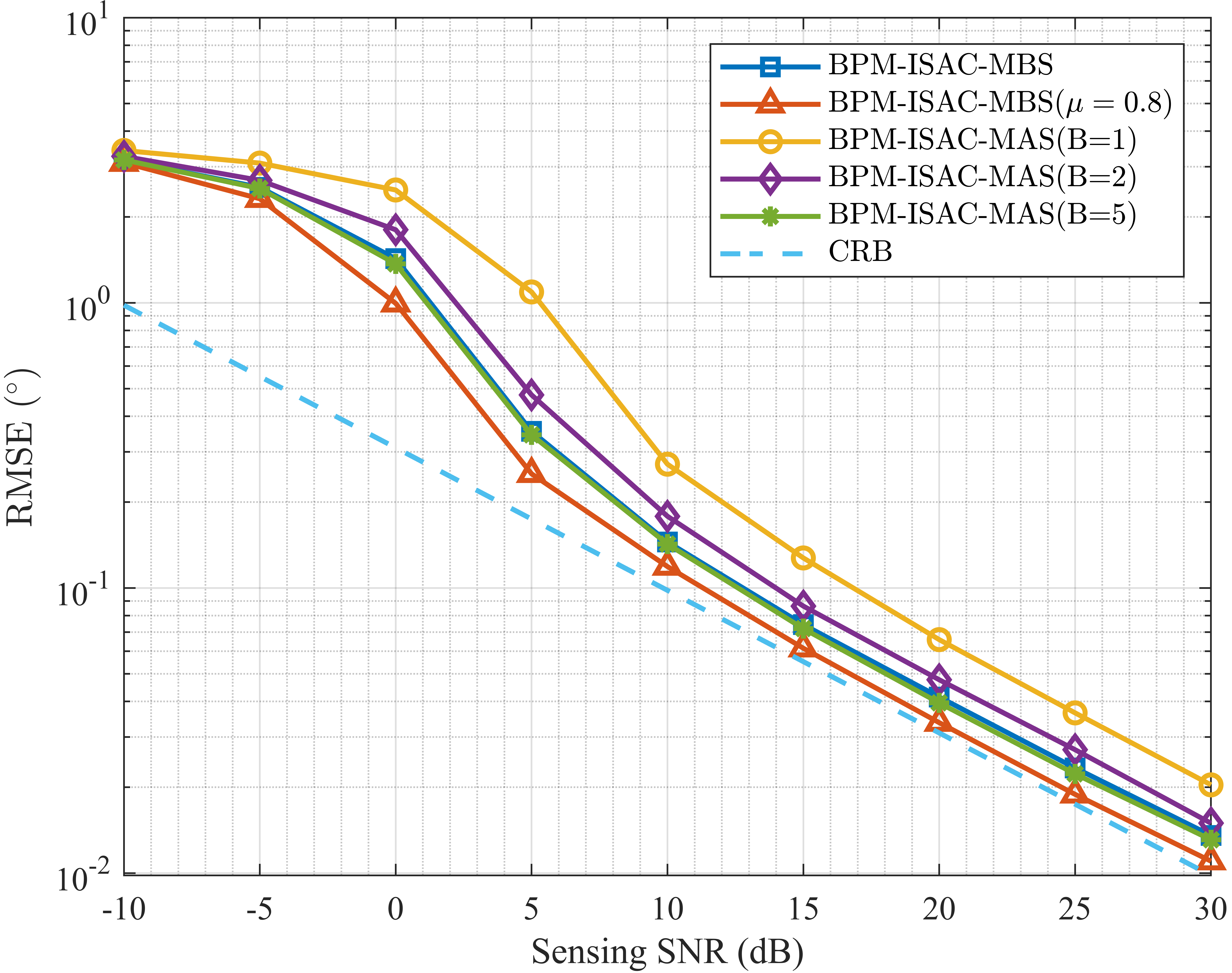}
    \vspace{-3mm}
    \caption{The RMSE performance of BMUSIC-based DoA estimation algorithm versus sensing SNR ($\mu=0.5$ or 0.8).}
    \label{S6}
    \vspace{-4mm}
\end{figure}
\begin{figure}[ht]
    \centering
    \includegraphics[width=8cm]{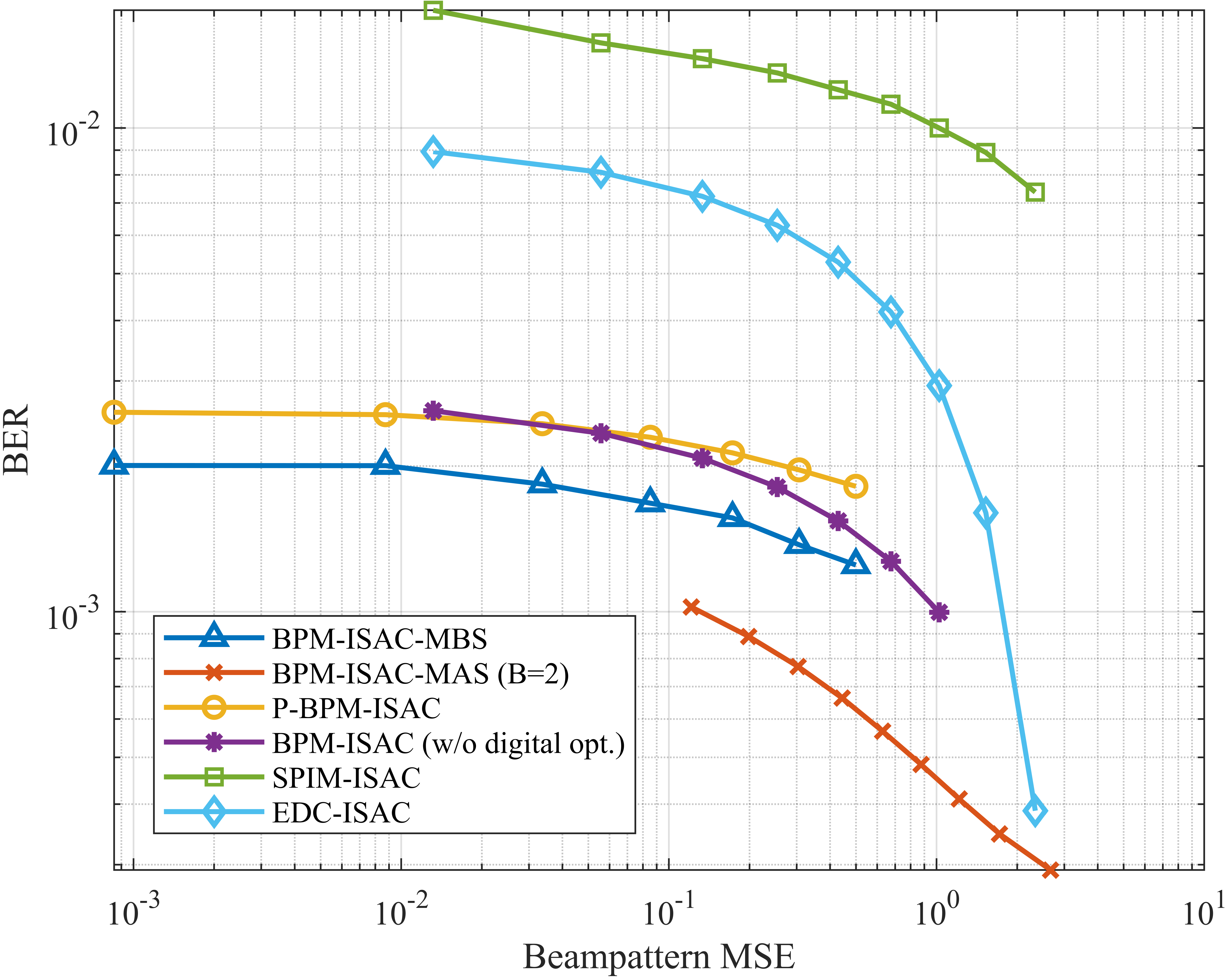}
    \vspace{-3mm}
    \caption{{BER and beampattern MSE performance trade-off among different ISAC transceiver designs (SNR = 0 dB).}}
    \label{S8}
\end{figure}
In Fig. \ref{S4}, we present the normalized beampattern of the proposed method at a certain moment when the sensing beam pointing at $38^{\circ}$.
It can be observed that the strongest beam points in the direction of interest, while multiple other beams are activated for communication.
Due to the discrete codewords, there exists a certain deviation from the desired direction for BPM-ISAC-MBS, which can be neglected for massive antennas.
Compared with BPM-ISAC-MBS based on beamspace, BPM-ISAC-MAS offers a more flexible beam pattern, enhancing the equality of the equivalent digital channel.

In Fig. \ref{S5}, the beampattern MSE versus weighting coefficient $\mu$ is presented to illustrate the beampattern performance of the proposed method under different $\mu$ values.
The beampattern MSE decreases with the increase of $\mu$ because the augmented communication constraint compromises the power allocation of sensing beams. 
In addition, the BPM-ISAC-MBS without communication digital precoder optimization has a higher beampattern MSE. 
This is because optimized communication power allocation can improve communication performance and  {implicitly} relax the constraint on sensing power.

To further validate the sensing performance of the proposed method, root mean square error (RMSE) of DoA estimation versus sensing SNR using beamspace MUSIC algorithm \cite{lee1990resolution} is shown in Fig. \ref{S6}. 
The sensing SNR is defined as the ratio between the $T_{\rm R}$ and the noise power of $\bm{\xi}_{\rm R}$.
It can be observed that, at high SNR, there are different gaps between the RMSE of DoA estimation and the ideal CRB defined in Eq. (\ref{CRB}). 
This is due to the varying degrees of suppression of sensing power under different constraints.
The performance of DoA estimation is generally consistent with the beampattern performance, indicating the effectiveness of choosing the beampattern as the sensing performance metric. 
\vspace{-3mm}
\subsection{Communication and Sensing Trade-off}
In Fig. \ref{S8}, the communication and sensing trade-off curves between BER and beampattern MSE among different schemes are presented for fair comparison.
Within the testing scope, BPM-ISAC consistently outperforms other alternatives.
It is notable that for large beampattern MSE, i.e., the sensing power is limited, the EDC-ISAC scheme achieves similar BER performance as BPM-ISAC-MBS with 2-bit PSs.
However, as the sensing power increases, the BER performance of EDC-ISAC and SPIM-ISAC sharply deteriorates, whereas the proposed scheme demonstrates significant advantages thanks to effective optimization.
BPM-ISAC-MBS with 2-bit PSs demonstrates an advantage over BPM-ISAC-MAS due to the higher degree of freedom of analog precoders.
In addition, the performance of BPM-ISAC without digital-part optimization and P-BPM-ISAC are provided to demonstrate the effectiveness of power allocation and beam pattern modulation, respectively.

\section{Conclusions}
In this paper, we have proposed a novel beam pattern modulation embedded mmWave ISAC hybrid transceiver design, termed BPM-ISAC. 
BPM-ISAC aims to retain the SE benefits of primitive beamspace modulation schemes while addressing performance bottlenecks in their extension to  ISAC functionalities. 
To ensure near-optimal performance for BPM-ISAC, we formulated an optimization problem to minimize the sensing beampattern MSE under the symbol MSE constraint and solved it by optimizing analog and digital parts sequentially. 
Both the MBS and MAS hybrid structures are considered for analog configurations.
Theoretical analysis and simulation results have verified that the proposed BPM-ISAC offers an overall improved trade-off in sensing and communication performance.

\begin{appendices}
\section{Proof of Proposition 1}\label{A}
At high SNR, $\bm{W}_{\rm BB,0}\simeq\bm{H}_{\rm C}^\dagger=(\bm{H}_{\rm C}^{\rm H}\bm{H}_{\rm C})^{-1}\bm{H}_{\rm C}^{\rm H}$. Thus 
$\frac{N_{\rm C}}{K}\Vert\bm{W}_{\rm BB,0}\bm{H}_{\rm C}-\bm{I}_K\Vert_F^2\simeq 0$ and the objective function is simplified as
\begin{align}\label{simp Gamma}
    \overline{\rm MSE}_{\rm C}\simeq &{\rm Tr}\bigg(\left(\bm{W}_{\rm RF}^{\rm H}\bm{H}\bm{F}_{\rm C}{\rm{diag}}\left(\bm{t}\right)^2\bm{D}\bm{F}_{\rm S}\bm{H}^{\rm H}\bm{W}_{\rm RF}+\sigma^2\bm{I}_K\right)\nonumber\\
&\left.\left(\bm{W}_{\rm RF}^{\rm H}\bm{H}\bm{F}_{\rm C}\bm{F}_{\rm C}^{\rm H}\bm{H}^{\rm H}\bm{W}_{\rm RF}\right)^{-1}\right).
\end{align}

Let $\overline{\rm MSE}_{\rm C}={\rm Tr}\left(\frac{w}{K+T}\bm{I}_{K+T}\right)$.
With the Schur complement \cite{chen2020new}, it can be proved \cite{di2020hybrid} that minimizing $\overline{\rm MSE}_{\rm C}$ is equivalent to minimizing $w$ and $\mathcal{P}.1\mbox{-}2$ can be reformulated as shown in proposition 1.
\vspace{-3mm}
\section{Proof of Proposition 2}
\label{LMI}
Let $\bar{\bm{F}}=\bm{F}_{\rm C}\bm{F}_{\rm C}^{\rm H}$. For each element, we have
\begin{align}
    \bar{\bm{F}}[i,j]=&\sum_{k=1}^K\bm{F}_{\rm C}[i,k]\overline{\bm{F}_{\rm C}[j,k]}\nonumber\\
=&\sum_{k=1}^Ke^{j(\angle\bm{F}_{\rm C}[i,k]-\angle\bm{F}_{\rm C}[j,k])}\nonumber\\
=&\sum_{k=1}^K \cos \bigtriangleup \theta_{i,k,j,k}+j\sin\bigtriangleup \theta_{i,k,j,k}\nonumber\\
=&\sum_{k=1}^K \bm{c}^{\rm T}\bm{y}^{i,k,j,k}+j\bm{s}^{\rm T}\bm{y}^{i,k,j,k},
\end{align}

\noindent where $\bigtriangleup \theta_{i,k,j,k}=\angle\bm{F}_{\rm C}[i,k]-\angle\bm{F}_{\rm C}[j,k]$. Thus $\bm{F}_{\rm C}\bm{F}_{\rm C}^{\rm H}$ is transformed into the linear function of $\bm{y}^{i,j,i^{\prime},j^{\prime}}$.

\section{Proof of Proposition 3}\label{KKT_proof} 
{
The lagrange function of $\mathcal{P}.2$ is derived as
\begin{align}
L(\bm{p},\bm{b},\lambda_1,\lambda_2,\lambda_3)&=\sum_{i=1}^W d_i(\lvert\bm{a}_{N_{\rm t}}^{\rm H}(\theta_i)\bm{F}_{\rm S}[:,i]\rvert b_i-t_i)^2+\nonumber\\
\lambda_1&(\Vert{\rm{diag}}(\bm{p})\Vert_F^2-K)+\lambda_2(\Vert\bm{D}^{\frac{1}{2}}\bm{b}\Vert_F^2 - T_{\rm R})\nonumber\\
&+\lambda_3({\rm MSE_C}(\bm{p},\bm{b})-\Gamma)
\end{align}
Denote the convergence solution of $\mathcal{P}.2$ with alternating optimization algorithm as $\bm{P}_{\rm C}=\bm{p}^*$ and $\bm{P}_{\rm S}=\bm{p}^*$.
Denote $\bm{W}_{\rm BB}(\bm{p}^*,\bm{b}^*)$ as $\bm{W}_{\rm BB}^*$.
The KKT conditions of $\mathcal{P}.2$ are given by
\begin{subequations}
\begin{align}
&\frac{\partial L}{\partial \bm{p}}\bigg|_{\bm{p}=\bm{p}^*}=\bm{0},  \frac{\partial L}{\partial \bm{b}}\bigg|_{\bm{b}=\bm{b}^*}=\bm{0}\label{KKT_1},\\
&\Vert{\rm{diag}}(\bm{p}^*)\Vert_F^2\le K,  \Vert\bm{D}^{\frac{1}{2}}\bm{b}^*\Vert_F^2 \le T_{\rm R},  {\rm MSE_C}(\bm{p}^*,\bm{b}^*)\le\Gamma\label{KKT_2},\\
& \lambda_1\ge 0, \lambda_2\ge 0, \lambda_3\ge 0\label{KKT_3},\\
& \lambda_1(\Vert{\rm{diag}}(\bm{p}^*)\Vert_F^2- K)=0,  \lambda_2(\Vert\bm{D}^{\frac{1}{2}}\bm{b}^*\Vert_F^2 - T_{\rm R})=0, \nonumber\\
&\lambda_3({\rm MSE_C}(\bm{p}^*,\bm{b}^*)-\Gamma)=0\label{KKT_4}
\end{align}
\end{subequations}
For the convex QCQP problem (\ref{QCQP}) of the step 1), ($\bm{p}^*,\bm{b}^*$) is its optimal solution and naturally satisfies KKT conditions. 
Different from $\mathcal{P}.2$, $\bm{W}_{\rm BB}$ in Problem (\ref{QCQP}) is fixed as constant matrix $\bm{W}_{\rm BB}^*$.
It is clear that in this case, constraint (\ref{KKT_2}), (\ref{KKT_3}), and (\ref{KKT_4}) are satisfied. 
In addition, the stationarity condition satisfies
\begin{equation}
\frac{\partial L(\bm{W}_{\rm BB}=\bm{W}_{\rm BB}^*)}{\partial \bm{p}}\bigg|_{\bm{p}=\bm{p}^*}=\bm{0}, \frac{\partial L(\bm{W}_{\rm BB}=\bm{W}_{\rm BB}^*)}{\partial \bm{b}}\bigg|_{\bm{b}=\bm{b}^*}=\bm{0}
\end{equation}
Note that condition (\ref{KKT_1}) holds only when the following condition is satisfied.
\begin{subequations}
\begin{align}
&\frac{\partial {\rm MSE_C}(\bm{p})}{\partial \bm{p}}\bigg|_{\bm{p}=\bm{p}^*}=\frac{\partial {\rm MSE_C}(\bm{p},\bm{W}_{\rm BB}=\bm{W}_{\rm BB}^*)}{\partial \bm{p}}\bigg|_{\bm{p}=\bm{p}^*},\label{KKT_5}\\
&\frac{\partial {\rm MSE_C}(\bm{b})}{\partial \bm{b}}\bigg|_{\bm{b}=\bm{b}^*}=\frac{\partial {\rm MSE_C}(\bm{b},\bm{W}_{\rm BB}=\bm{W}_{\rm BB}^*)}{\partial \bm{b}}\bigg|_{\bm{b}=\bm{b}^*}.\label{KKT_6}
\end{align}
\end{subequations}
Taking (\ref{KKT_5}) as an example. 
According to the chain rule of differentiation, we have
\begin{align}
&\frac{\partial {\rm MSE_C}(\bm{p})}{\partial \bm{p}}\bigg|_{\bm{p}=\bm{p}^*}=\frac{\partial {\rm MSE_C}(\bm{p},\bm{W}_{\rm BB}=\bm{W}_{\rm BB}^*)}{\partial \bm{p}}\bigg|_{\bm{p}=\bm{p}^*}+\nonumber\\
& \frac{\partial {\rm MSE_C}(\bm{W}_{\rm BB})}{\partial \bm{W}_{\rm BB}}\bigg|_{(\bm{p},\bm{W}_{\rm BB})=(\bm{p}^*,\bm{W}_{\rm BB}(\bm{p}^*))}\frac{\partial {\rm MSE_C}(\bm{W}_{\rm BB})}{\partial \bm{p}}\bigg|_{\bm{p}=\bm{p}^*}.\label{chain}
\end{align}
Since Eq. (\ref{m}) is satisfied for convergence solution, $\bm{W}_{\rm BB}$ is the LMMSE equalizer to minimize ${\rm MSE_C}$ and satisfies
\begin{align}
\frac{\partial {\rm MSE_C}(\bm{W}_{\rm BB})}{\bm{W}_{\rm BB}}=0.
\end{align}
Thus the second term on the right side of Eq. (\ref{chain}) is 0 and condition (\ref{KKT_5}) is satisfied.
Similarly, condition (\ref{KKT_6}) is also satisfied.
Then condition (\ref{KKT_1}) is satisfied.
Therefore, the convergence point satisfies the KKT conditions of $\mathcal{P}.2$.}

\section{Proof of Proposition 4}
\label{B}Let $P_{M_{\rm R}}(r)$ and  $P_{M_{\rm B}}(r,b)$  represent the probability that sensing beams cover $M_{\rm R}=r$ paths and these paths cover $M_{\rm B}=b$ received beams, respectively. 
Let $P_{M_{\rm C}}(c,r,b)$ represents the probability that $M_{\rm C}=c$ paths are available for communication when $M_{\rm R}=r$ and $M_{\rm B}=b$. 
Then the probability distribution of $M_{\rm C}$ can be easily obtained as Eq. (\ref{P com}).
Both $P_{M_{\rm R}}(r)$ and $P_{M_{\rm C}}(c,r,b)$ belong to the classical probability model and can be derived as Eq. (\ref{P_MR}) and Eq. (\ref{P_MC}) using the combination number formula.
For $P_{M_{\rm B}}(r,b)$, we can obtain it through a recursive process as Eq. (\ref{P_MB}).
\section{Proof of Proposition 5}
\label{C}
Supposing $M_{\rm C}<K$, effective communication cannot be achieved and BER is set to 0.5.
When $M_{\rm C}\geq K$, denote $\gamma_i=\frac{P}{N_{\rm t}N_{\rm r}}\bm{H}_{\rm C}^2[i,i]$ and $\bigtriangleup x_i=\bm{\bar{x}}_{\rm C}[i]-\bm{\hat{x}}_C[i]$, and then the pairwise error probability is given as 
\begin{align}\label{Q_f}
\hspace{-0.2in}&P(\bm{\bar{x}}_{\rm C} \rightarrow \bm{\hat{x}}_C)\nonumber\\
=&
\sum\limits_{c=K}^P
\mathbb{E}_{M_{\rm C}=c}\hspace{-0.02in}\left\{Q(\sqrt{\frac{\Vert\bm{H}_{\rm C}(\bm{\bar{x}}_{\rm C}-\bm{\hat{x}}_C)\Vert_2^2}{2\sigma^2}})\right\}\hspace{-0.04in}+\hspace{-0.04in}\frac{1}{2^\eta}P(M_{\rm C}<K)\nonumber\\
\overset{(b)}{\simeq} &\sum_{c=K}^P \mathbb{E}_{M_{\rm C}=c}\left\{\frac{1}{12}{\rm{exp}}(-\frac{N_{\rm t}N_{\rm r}}{4P\sigma^2}\sum_{i=1}^K\gamma_i^2\bigtriangleup x_i^2)\right.\nonumber\\
&+\frac{1}{4}\left.{\rm{exp}}(-\frac{N_{\rm t}N_{\rm r}}{3P\sigma^2}\sum_{i=1}^K\gamma_i^2\bigtriangleup x_i^2)
\right\}+\frac{1}{2^\eta}P(M_{\rm C}<K),
\end{align}
where $(b)$ is for that $Q(x)\simeq\frac{1}{12}e^{-\frac{x^2}{2}}+\frac{1}{4}e^{-\frac{2x^2}{3}}$. 
According to Eq. (\ref{channel}), $\gamma_i$ follows a unit exponential distribution. Assume that $K$ out of $c$ largest paths are selected, satisfying $\gamma_1<\gamma_2\cdots<\gamma_K$. Thus the probability distribution of $\bm{\gamma}=[\gamma_1,\cdots,\gamma_K]^{\rm T}$ is given by
\begin{align}
f(\bm{\gamma})=\frac{c!}{(c-K)!}(1-e^{-\gamma_1})^{c-K}
\prod\limits_{i=1}^Ke^{-\gamma_i}.
\end{align}
Then the first item of Eq. (\ref{Q_f}) is derived as
\begin{align}
&\int_{0}^{+\infty}\int_{\gamma_1}^{+\infty}\cdots\int_{\gamma_{K-1}}^{+\infty}f(\gamma)e^{-\frac{N_{\rm t}N_{\rm r}}{4P\sigma^2}\sum\limits_{i=1}^K\gamma_i^2\bigtriangleup x_i^2}d\gamma_1\cdots\gamma_K\nonumber\\
=&\frac{c!}{(c-K)!\prod \limits_{j=2}^{K}n_j}\int_{0}^{+\infty}e^{-\gamma_1n_1}(1-e^{-\gamma_1})^{c-K}d\gamma_1\nonumber\\
=&\frac{c!}{(c-K)!\prod \limits_{j=2}^{K}n_j}\mathbb{B}(n_1,c-K+1),
\end{align}
where $n_j=\sum_{i=j}^{K}(\frac{N_{\rm t}N_{\rm r}}{4P\sigma^2}{\bigtriangleup x_i^2}+1)$ and $\mathbb{B}(p,q)=\int_{0}^1x^{p-1}(1-x)^{q-1}dx$ is the Beta function \cite{gradshteyn2014table}. 
Similarly, we can obtain the second item of (\ref{Q_f}). Then, the pairwise error probability arrives at Eq. (\ref{pairwise}).
\end{appendices}
\bibliographystyle{IEEEtran}
\small
\bibliography{IEEEabrv, ref}

\end{document}